\Crefname{equation}{Equation}{Equations}
\theoremstyle{plain}
\newtheorem{theorem}{Theorem}
\newtheorem{corollary}[theorem]{Corollary}
\newtheorem{condition}[theorem]{Condition}
\newtheorem{lemma}[theorem]{Lemma}
\newtheorem{proposition}[theorem]{Proposition}
\newtheorem{remark}[theorem]{Remark}
\theoremstyle{definition}
\newtheorem{definition}[theorem]{Definition}
\newtheorem{notation}[theorem]{Notation}
\newtheorem{example}[theorem]{Example}
\DeclarePairedDelimiter\parentheses{(}{)}
\DeclarePairedDelimiter\braces{\{}{\}}
\DeclarePairedDelimiter\brackets{[}{]}
\DeclarePairedDelimiter\absolute{|}{|}
\DeclarePairedDelimiter\brackhalf{[}{)}
\begin{document}
\newcommand{\supp}{{\mathrm{supp}}}
\newcommand{\poly}{{\mathrm{poly}}}

\title{Multi-Channel Bayesian Persuasion}
\date{March 15, 2022}
\author{Yakov Babichenko\thanks{Technion--Israel Institute of Technology | \emph{E-mail}: \href{mailto:yakovbab@technion.ac.il}{yakovbab@technion.ac.il}.} 
\and 
Inbal Talgam-Cohen\thanks{Technion--Israel Institute of Technology | \emph{E-mail}: \href{mailto:italgam@cs.technion.ac.il}{italgam@cs.technion.ac.il}.} 
\and 
Haifeng Xu \thanks{University of Virginia | \emph{E-mail}: \href{mailto:hx4ad@virginia.edu}{hx4ad@virginia.edu}.} \and Konstantin Zabarnyi\thanks{Technion--Israel Institute of Technology | \emph{E-mail}: \href{mailto:konstzab@gmail.com}{konstzab@gmail.com}.}}
\maketitle

\begin{abstract}
The celebrated Bayesian persuasion model considers strategic communication between an informed agent (the sender) and uninformed decision makers (the receivers). The current rapidly-growing literature mostly assumes a dichotomy: either the sender is powerful enough to communicate separately with each receiver (a.k.a.~\emph{private} persuasion), or she cannot communicate separately at all (a.k.a.~\emph{public} persuasion). We study a model that smoothly interpolates between the two, by considering a natural multi-channel communication structure in which each receiver observes a subset of the sender's communication channels. This captures, e.g.,~receivers on a network, where information spillover is almost inevitable. 

We completely characterize when one communication structure is better for the sender than another, in the sense of yielding higher optimal expected utility universally over all prior distributions and utility functions. The characterization is based on a simple pairwise relation among receivers -- one receiver \emph{information-dominates} another if he observes at least the same channels. We prove that a communication structure $M_1$ is (weakly) better than $M_2$ if and only if every information-dominating pair of receivers in $M_1$ is also such in $M_2$. We also provide an additive FPTAS for the optimal sender's signaling scheme when the number of states is constant and the graph of information-dominating pairs is a directed forest. Finally, we prove that finding an optimal signaling scheme under multi-channel persuasion is, generally, computationally harder than under both public and private persuasion.
\end{abstract}

\section{Introduction}
\label{sec:intro}
Bayesian persuasion is the study of strategic information revelation by a \emph{sender} who wishes to influence \emph{receivers} to act in her favour~\cite{kamenica2011bayesian}. The sender's signals reveal to the receivers partial information about the state of the world, based on which they choose their actions and generate utility for the sender. The signals are sent according to a carefully-designed revelation policy called a \emph{signaling scheme}. Introduced in 2011, the Bayesian persuasion model has already raised fundamental algorithmic questions, has led to an explosive volume of research, and has had far-reaching economic impact (see, e.g., the surveys of \citet{dughmi2017algorithmic}, \citet{kamenica2019bayesian}, \citet{bergemann2019information}, and~\citet{Candogan20}).

Thus far, the persuasion of multiple receivers has mainly been studied in two dichotomous settings: \emph{public} persuasion, in which the sender announces a signal observed by all receivers~\cite[e.g.,][]{cheng2015mixture,bhaskar2016hardness,dughmi2019hardness, Ozan19};
and \emph{private} persuasion, in which the sender communicates with each receiver via a separate channel~\cite[e.g.,][]{babichenko2016computational,arieli2019private}. Several works compare between these two extremes~\cite{dughmi2017algorithmic, dughmi2017algorithmic2,bergemann2019information,taneva2019information}. However, in many prominent Bayesian persuasion applications, such as voting~\cite{alonso2016persuading,castiglioni2020persuading,castiglioni2021persuading,kerman2020persuading,kerman2021persuading} or advertising~\cite{fu2012ad,Miltersen12,Emek12,badanidiyuru2018targeting}, the sender-receivers communication takes a more intricate form. For example, a local politician may communicate with the public through several media channels, of which each individual can observe only a \emph{subset}. 
Similarly, a retailer may promote its brand by placing distinct advertisements on multiple online platforms, with different user populations exposed to different subsets of these ads.

\vspace{2mm}
\noindent{\bf Research goals: General communication structures.} In this paper, we consider a model of \emph{multi-channel Bayesian persuasion}, which captures generalized persuasion settings in-between public and private, by allowing arbitrary communication channels between the sender and subpopulations of receivers. Our first goal is to understand how the \emph{communication structure} -- that is, which information channels are observed by which receivers -- affects the persuasion problem of designing an optimal signaling scheme. Concretely, we pose the following question: {\bf When is one communication structure better for the sender than another?} To facilitate the comparison, we seek universally better structures -- that is, across all possible prior distributions over the states of nature and independently of sender's and receivers' utility functions.

This question is closely related to an intriguing result of~\citet{galperti2019belief}, analogous to our Theorem~\ref{thm:order}; our result was obtained independently and later than their (see also a latter version~\citep{galperti2020belief}). They study a setting with agents located in a directed network that specifies the information spillover, and the sender provides some information to a subset of her choice of these agents -- the subset of the \emph{seeds}. All agents in the network transmit all the information they receive to their out-neighbors. After the communication phase, agents are playing a game with externalities. This model can be viewed as multi-channel Bayesian persuasion, where the channels are the seeds, and agent $i$ observes channel $j$ iff there is a path from seed $j$ to agent $i$ in the network. We use very similar ideas to characterize the feasible outcomes. In particular, the key idea of utilizing the secrete sharing protocol in Bayesian persuasion, and the notion of \emph{information dominating pairs} -- referred to as \emph{more connected agents} in~\cite{galperti2019belief} -- play a central role in both works.

Our second goal is to begin the exploration of the new computational landscape arising from the introduction of communication structures into the standard persuasion model. While, in general, we show that this could introduce computational hardness to previously tractable settings, we seek classes of structures that allow efficient algorithms for approximately-optimal persuasion.

\vspace{2mm}
\noindent{\bf A characterization result.}
The following simple notion plays a central role in the characterization result. Given a communication structure $M$, we say that an ordered pair of receivers $\parentheses*{i_1,i_2}$ is \emph{information-dominating} if receiver $i_1$ observes all the communication channels that receiver $i_2$ does. One implication of $\parentheses{i_1,i_2}$ being information-dominating is that for every signaling scheme of the sender, receiver $i_1$ is weakly more informed than receiver $i_2$ in the \emph{Blackwell sense}~\cite{blackwell1950comparison, Blackwell53}.\footnote{That is, the distribution of posterior beliefs of $i_1$ regarding the state of nature under any signaling scheme must be a \emph{mean-preserving spread} of the distribution of $i_2$ -- see Subsection~\ref{sub:FPTAS}.}
We use the information-domination primitive to fully answer the question above -- i.e.,~to characterize when one communication structure is preferable to another in arguably the strongest possible sense.

A first observation is that if $\parentheses*{i_1,i_2}$ is information-dominating in a communication structure $M_1$ but not in another structure $M_2$, then it cannot be the case that $M_1$ is always better than $M_2$ for the sender. Intuitively, an information-dominating pair of receivers restricts the extent upon which the sender can use signals to anti-correlate their posterior beliefs; one can apply this intuition to devise the sender's and receivers' utilities such that this restriction becomes a strict disadvantage to the sender. Hence, if $M_1$ is always weakly preferable over $M_2$ for the sender, then the set of all information-dominating pairs of $M_1$ must be contained in that of $M_2$. 

More surprisingly, we show that the opposite direction of the above statement is also true -- if the information-dominating pairs of $M_1$ are also such in $M_2$, then $M_1$ is weakly better than $M_2$ for the sender (Theorem~\ref{thm:order}). This result has the following interpretation -- apart from the information-dominating pairs of receivers, which impose restrictions on the receivers' admissible beliefs, no additional restrictions are imposed. Our characterization result holds in the most general model of Bayesian persuasion in which receivers may have \emph{externalities},\footnote{The presence of \emph{externalities} means that the receivers' actions affect each other; mathematically, after observing the signal realizations -- the receivers play a Bayesian game with incomplete information.} even though it is well-known to be notoriously difficult to characterize optimal signaling schemes in the presence of such externalities~\cite{taneva2019information}.

As a corollary of our characterization result, the sender can implement private persuasion in every communication structure with no information-dominating pairs (Corollary~\ref{cor:optimal}). Intriguingly, this implies that private persuasion for $k$ receivers can be implemented using only $O\parentheses*{\log k}$ communication channels (Proposition~\ref{pro:private}). Note that private signaling is an optimal communication structure from the sender's point of view, as in this structure she can emulate any signaling scheme implementable in any other structure (Corollary~\ref{cor:best}).

\vspace{2mm}
\noindent{\bf A positive computational result.} In Section~\ref{sec:alg}, we demonstrate the new algorithmic questions that arise from our multi-channel model. In Subsection~\ref{sub:FPTAS}, we study a fundamental special case of multi-channel persuasion in which the directed graph representing the information-dominating receiver pairs is a \emph{forest}. Communication structures corresponding to forests capture natural communication patterns in hierarchical, compartmentalized organizations.\footnote{Consider, e.g.,~an intelligence agency, the head of which (sender) communicates information to her subordinates (receivers) through several separate communication channels; the information-domination tree represents the organizational hierarchy.}

For this case of induced forests, we provide in Theorem~\ref{thm:FPTAS} an additive FPTAS for the optimal signaling scheme, under the assumptions of a constant number of states of nature, no externalities among receivers and a sender with an additive utility function. The latter two restrictions are commonly adopted in recent algorithmic studies \cite{dughmi2017algorithmic2,hahn2020prophet}, whereas the first assumption of a constant number of states is necessary for circumventing known hardness results; in particular, with a non-constant number of states, even public persuasion is known to be hard.\footnote{\citet{dughmi2017algorithmic2} prove that one can achieve neither an additive PTAS nor a constant-factor approximation of the optimal sender's utility in time polynomial in the number of states, unless P $=$ NP and even if the sender's utility is additive.} An interesting direction for future research is the general relation between the graph-theoretic properties induced by the communication structure and the computational tractability of the optimal signaling scheme.

\vspace{2mm}
\noindent{\bf Hardness of multi-channel persuasion.} Multi-channel persuasion is, generally, harder than public or private persuasion -- both conceptually and computationally. In Subsection~\ref{sub:hardness}, we prove that for a simple and general family of sender's utility functions, \emph{separable supermajority} functions, finding an optimal signaling scheme in general multi-channel persuasion is NP-hard, while it is computationally tractable for both public and private persuasion (Theorem~\ref{thm:hardness}).
\subsection{Related Literature}

\vspace{2mm}
\noindent{\bf Between public and private persuasion.}
As mentioned above, the vast majority of the algorithmic and economic literature on Bayesian persuasion with multiple receivers focuses on either private or public persuasion, often comparing and contrasting these two extremes. We describe here works that diverge from this pattern and are, thus, related to our model of multi-channel persuasion. We have already mentioned~\citet{galperti2019belief,galperti2020belief} as the most closely related work to ours. \citet{kerman2021persuading}~consider~\emph{communicating voters}: the voters (receivers) are located on a network; a politician (sender) tries to persuade them to vote for a proposal by sending them private signals; each voter ends up observing his own signal \emph{and} the signals of his neighbours. Indeed, on social media, when a voter targeted by a politician shares the politician's message, his friends also see the message. This model corresponds to voters who communicate with each other by a single communication phase, in which every voter reveals his private signal to his~neighbours before deciding how to vote.  Our characterization result has interesting implications to such a model. In particular, it immediately yields some results of~\cite{kerman2021persuading} (see Subsection~\ref{sub:net}). \citet{CandoganSpillovers20}~studies information spillover among receivers on a network in a general (not necessarily voting) context. He restricts attention to a special class of signaling schemes -- \emph{threshold schemes} -- and to a specific family of communication structures in which the receivers arrive sequentially, and each receiver see all his ancestors' information. This work presents a ``sequential'' view in which receivers see the realized signals from their predecessors in an acyclic network.

\citet{brooks2019information} study a classic question in economics regarding \emph{information orders} -- what is the relation between being ``more informed'' about the state of nature in the Blackwell sense, and being ``more informed'' in the sense of observing a refined signal about the state? Their model considers an \emph{information hierarchy}, which they define as a directed graph with the receivers as the vertices and the Blackwell-ordered receiver pairs as edges. This work is connected to ours as any communication structure defines a (possibly intricate) information hierarchy.\footnote{A setting with $n$ communication channels and $k=2^n$ receivers in which every agent observes a different subset of channels is an example of an intricate information hierarchy.} Unlike~\cite{brooks2019information}, who study generic information hierarchies, in our case the sender's signaling scheme specifies the information hierarchy. Interestingly, the results of~\cite{brooks2019information} depend on the information hierarchy, and their positive results only hold for \emph{forest hierarchies}; in contrast, in our problem, the naive consideration of information-dominating pairs suffices to determine what utility the sender can achieve and how. While~\cite{brooks2019information} do not consider algorithmic aspects, their results for forest hierarchies prove useful for our algorithmic result in Subsection~\ref{sub:FPTAS}. 

\citet{mathevet2020organized} study how to optimally transmit information in organizations, exploring as a contrast to public signaling the notion of \emph{vertical} signaling -- that is, using communication over the organizational hierarchy. A crucial focus of their model, unlike ours, is that receivers \emph{strategically} choose what to pass on down the hierarchy.

\vspace{2mm}
\noindent
{\bf
Externalities among receivers.} 
Many previous works on Bayesian persuasion with multiple receivers assume no externalities among receivers~\cite[e.g.,][]{babichenko2016computational,arieli2019private,xu2020tractability}.
With externalities, hardness results have been established even for two receivers and even for public persuasion~\cite{bhaskar2016hardness,Rubinstein17,dughmi2019hardness}. However, the general case in which receivers play a Bayesian game with externalities has been considered in economics~\cite{bergemann2019information,taneva2019information,forges2020games,mathevet2020information,koessler2021splitting}. The classic work on persuasion in Bayesian games with incomplete information by~\citet{bergemann2013robust, bergemann2016bayes} provides a powerful, simplified approach for dealing with multi-receiver persuasion with externalities through the notion of \emph{Bayesian correlated equilibrium}. Unfortunately, one cannot apply this approach to our multi-channel persuasion model, as it requires \emph{private} recommendations to each receiver. 

\vspace{2mm}
\noindent
{\bf
Cryptographic-inspired proofs.}
The proof of our characterization result (Theorem~\ref{thm:order}) uses cryptographic-flavoured techniques. This proof is similar in spirit to secret sharing protocols, which are commonly studied in cryptography~\cite{blakley1979safeguarding, shamir1979share}. Such protocols allow to reveal a certain secret to any set of agents containing at least one subset from a pre-specified collection $C$, by sending private signals to the agents, such that no set of agents that does not contain any element of $C$ can learn anything about the secret. In our paper, each receiver can be considered a set of channels - ``agents'' to which the sender aims to reveal a shared secret. We simultaneously share different ``secrets'' between different sets of channels.

To our knowledge, proofs similar to the one of Theorem~\ref{thm:order} are not common in the study of Bayesian persuasion, besides~\cite{galperti2019belief, galperti2020belief}, who use them in a similar manner to ours. Two related examples from other subfields of game theory are:
\begin{enumerate}
    \item Implementing correlated equilibria via cryptography~\cite{dodis2000cryptographic}.
    \item Analyzing competing-mechanism games~\cite{attar2021keeping}.
\end{enumerate}
In the latter example, private disclosures are shown to generate equilibrium outcomes that cannot be supported otherwise, since these disclosures can contain encryption keys.
\section{Preliminaries}
\label{sec:prelims}
\noindent
{\bf Standard notations.} Let $\brackets*{c}:=\braces*{1,\ldots,c}$ for every positive integer $c$. For a set $B$, denote by $\Delta\parentheses*{B}$ the set of all the probability measures over $B$.

\vspace{2mm}
In this section, we describe our persuasion model for the characterization result in Section~\ref{sec:characterization}, which generalizes standard algorithmic Bayesian persuasion in two ways:
\begin{enumerate}
    \item It encompasses public and private persuasion, as well as all settings ``in between'' public and private, by introducing the simple notion of a communication structure.
    \item Unlike most algorithmic works on persuasion, it allows externalities among the receivers -- i.e., a receiver's utility is allowed to depend on others' beliefs and actions -- through Harsanyi's universal type space; this makes our results robust to receivers' externalities. For simplicity of presentation, we relegate the description of the general case through universal type spaces to Appendix~\ref{app:univarsal}.
\end{enumerate}

\vspace{2mm}
\noindent
{\bf Model overview.} We consider a single-sender, multi-receiver Bayesian persuasion environment with $k$~receivers $R_1,\ldots,R_k$.\footnote{We address the sender as female and the receivers as males.} There are $n$ communication channels $C_1,...,C_n$.

\vspace{2mm}
\noindent
{\bf Communication structure.}
Our model diverges from most previous work on persuasion in that the sender can transmit signals separately to (possibly overlapping) subsets of receivers, as specified by a communication structure. Consider a $k\times n$ binary matrix $M$ where $M_{i,j}=1$ if receiver $R_i$ observes the signal that is sent via the communication channel $C_j$ and $M_{i,j}=0$ otherwise. Call the matrix $M$ the \emph{communication structure}.

Denote the state of nature space by $\Omega$, and the common prior by $p\in\Delta\parentheses*{\Omega}$. For simplicity of exposition, we set $\Omega=\brackets*{0,1}$, but all our results apply also in the more general case with $\Omega\subseteq\brackets*{0,1}^d$ for some positive integer $d$.

\vspace{2mm}
\noindent
{\bf Signaling schemes.} A \emph{signalling scheme} of the sender consists of $n$ sets of possible \emph{signal realizations} $S_1,...,S_n$ and a mapping $\pi:\Omega \to \Delta \parentheses*{\times_{j\in \brackets*{n}} S_j}$; that is, $\pi$ assigns a distribution over the combination of signals in each state. Denote the random variable representing the combination of signals by $\sigma=\parentheses*{\sigma_j}_{j\in \brackets*{n}}$.

We restrict attention to signaling schemes with $\absolute*{S_j}\leq \aleph$ for $j\in\brackets*{n}$; that is, at most $\aleph$ different signals are allowed to be sent on each communication channel.\footnote{Recall that $\aleph$ denotes the continuum cardinality. More specifically, we shall assume w.l.o.g.~that the set of signals is a subset of $\brackhalf*{0,1}$.} In Appendix~\ref{app:aleph}, we show that this assumption is not restrictive; namely, we prove that $\aleph$ signals suffice for optimal persuasion in our~setting.

After observing the signal combination $\parentheses*{\sigma_j}_{j:M_{i,j}=1}$, the receiver $R_i$ forms a (marginal) \emph{posterior belief} about the state, $p_{\sigma}^i \in \Delta\parentheses*{\Omega}$. Formally, for every measurable set $B\subseteq \Omega$ and $i\in\brackets*{k}$, we have:

$$p_{\sigma=\parentheses*{s_1,\ldots,s_n}}^i\parentheses*{B}=Pr_{p,\pi}\brackets*{B|\sigma_j=s_j\;\;\forall j:\; M_{i,j}=1}.$$

We denote the \emph{joint posterior distribution} by $p_{\sigma}=\parentheses*{p_{\sigma}^1,\ldots,p_{\sigma}^k}\in\parentheses*{\Delta\parentheses*{\Omega}}^k$. Every signaling scheme $\pi$ induces a distribution over the product of states and profiles of posterior beliefs $P=P_{\pi}\in \Delta\parentheses*{\Omega \times \parentheses*{\Delta\parentheses*{\Omega}}^k}$ as follows. First, the state is drawn according to $\mu$. Thereafter, the signal combinations are drawn according to $\pi$. Finally, each receiver observes the signals in the corresponding communication channels and forms a posterior distribution.
  
\vspace{2mm}
\noindent
{\bf Utilities, the game and the equilibrium selection rule.} After observing the signals, the receivers are playing some \emph{continuation game with incomplete information $G$} (possibly with externalities). We fix compact and not-empty \emph{action spaces} $A_1,...,A_k\subseteq\mathbb{R}$ of $R_1,...,R_k$, respectively, and denote $A:\times_{i\in \brackets*{k}} A_i$. Let $\tau_i\in\Delta\parentheses*{A_i}$ ($i\in\brackets*{k}$) be the mixed strategy of the receiver $R_i$ in $G$ and $\tau=\times_{i\in \brackets*{k}} \tau_i$ the strategy profile.

Let $v:\Omega \times A \to \mathbb{R}$ be the sender's utility function, and let, for every $1\leq i\leq k$, $u_i\parentheses*{\omega,a_1,\ldots,a_k}:\Omega\times A\to\mathbb{R}$ be $R_i$'s utility function. To perform an optimization analysis from the sender's perspective, we should apply some \emph{equilibrium selection rule} $f$ on the way the game $G$ is played after each realization of signals. Formally, $f$ is a mapping from the space of all possible joint posteriors $\parentheses*{\Delta\parentheses*{\Omega}}^k$ to the set of the \emph{Bayes-Nash equilibria} in $G$. In a Bayes-Nash equilibrium, the strategy profile satisfies for every $i\in\brackets*{k}$, marginal posterior $p_{\sigma}^i\in \Delta\parentheses*{\Omega}$ and alternative strategy $\tau_i'\in\Delta\parentheses*{A_i}$ that $R_i$'s expected utility cannot be improved by a unilateral deviation to $\tau_i'$: 
\begin{align*}
&\mathbb{E}_{\omega'\sim p_{\sigma}^i, a_1'\sim\tau_1,\ldots,a_k'\sim\tau_k}\brackets*{u_i\parentheses*{\omega',a_1',\ldots,a_k'}}\geq\\
&\mathbb{E}_{\omega'\sim p_{\sigma}^i,\parentheses*{a_{i'}':i'\in\brackets*{k}\setminus\braces*{i}}\sim\parentheses*{\tau_{i'}:i'\in\brackets*{k}\setminus\braces*{i}},a_i''\sim\tau_{i}'}\brackets*{u_i\parentheses*{\omega',\parentheses*{a'_{i'}:i'\in\brackets*{k}\setminus\braces*{i}},a_i''}}.
\end{align*}

We impose a single requirement on the selection rule $f$: \emph{the outcome of the game is fully determined by $P$}, the distribution over states and profiles of beliefs. Formally, for every two signaling policies $\pi$ and $\pi'$ that induce the same distribution over posteriors $P$, the selected equilibria in the continuation games coincide. This requirement essentially means that signals do not play a role but only the posterior that they induce. This requirement is (implicitly) adopted in \emph{all} literature on Bayesian persuasion we are familiar with.\footnote{Yet, it is not hard to come up with selection rules violating this requirement.} 

Given a communication structure $M$, we denote by $\mathcal{P}_M$ the set of distributions $P$ that are implementable by some signaling policy $\pi$. Every such $P$ induces an (expected) sender's utility of $u\parentheses*{P}:=\mathbb{E}_{\parentheses*{\omega',p_{\sigma}}\sim P} \brackets*{v\parentheses*{\omega',f\parentheses*{p_{\sigma}}}}$. Note that $u$ -- the \emph{indirect sender's utility}~\cite{kamenica2011bayesian} -- is a mapping from $\Delta\parentheses*{\Omega \times \parentheses*{\Delta\parentheses*{\Omega}}^k}$ to $\mathbb{R}$.
\begin{remark}
\label{rem:general}
In general, the action of each receiver may be affected not only by her own belief on the distribution of $\omega$, but also by her belief on the beliefs of the other receivers on the distribution of $\omega$, and even by higher-order beliefs -- that is, the receivers may have \emph{externalities}. In this case, the distribution over posteriors $P$ as we defined it may be insufficient to determine the sender's utility, as the posterior should capture also high-order beliefs, but our results continue to hold -- please see Appendix~\ref{app:univarsal} for details.
\end{remark}
The sender's optimization target can be formulated as $V\parentheses*{M,u}=\sup_{P\in \mathcal{P}_M} u\parentheses*{P}$.\footnote{Since we did not impose an upper-semicontinuity or an analogous requirement on the equilibrium selection rule, the supremum might not be attainable.} The following summarizes the persuasion process:
\begin{enumerate}[itemsep=0mm]
    \item Knowing the communication structure $M$ and the Bayesian persuasion instance, the sender commits to a signaling scheme $\pi$.
    \item The sender observes the true state $\omega\in \Omega$ drawn according to the prior.
    \item The sender transmits signals $\sigma_1,\ldots,\sigma_n$ according to $\pi$ and $\omega$.
    \item For every $i\in\brackets*{k}$ and $j\in\brackets*{n}$, receiver $R_i$ observes signal $\sigma_j$ if and only if $M_{i,j}=1$; he forms a posterior belief $p_{\sigma}^i\in \Delta\parentheses*{\Omega}$ about $\omega$ according to the signal realizations he observes.
    \item For the joint posterior $p_{\sigma}=\parentheses*{p_{\sigma}^1,\ldots,p_{\sigma}^k}$ and a corresponding Bayes-Nash equilibrium $\parentheses*{\tau_1,\ldots,\tau_k}=f\parentheses*{p_{\sigma}}$, each receiver $R_i$ takes an action $a_i\in A_i$, where $a_i\sim \tau_i$ (independently of the other receivers' actions).
    \item The sender gets utility of $u\parentheses*{\omega,a_1,\ldots,a_k}$, while receiver $R_i$ gets utility of $u_i\parentheses*{\omega, a_1,\ldots,a_k}$ ($i\in\brackets*{k}$).
\end{enumerate}
To facilitate the formulation of our robust characterization result in the next section, we shall define the following.
\begin{definition}
\label{def:setting}
Fixing a communication structure $M$, a Bayesian persuasion \emph{instance} is given by a tuple $\parentheses*{\Omega,p,A,u,u_1,\ldots,u_k,f}$ of state space, prior distribution, action space, utilities and equilibrium selection rule. 
\end{definition}
\section{A Hierarchy of Communication Structures}
\label{sec:characterization}
In this section, we compare communication structures \emph{robustly with respect to the persuasion instance}.
\begin{definition}[Partial order on communication structures]
\label{def:superior}
We say that a communication structure $M$ is \emph{superior} over a communication structure $M'$ if $V\parentheses*{M,u}\geq V\parentheses*{M',u}$ for \emph{every} Bayesian persuasion instance. In such a case, we denote $M'\preceq M$.
\end{definition}
Note that $\preceq$ is a a partial order on the space $\braces*{0,1}^{k\times n}$ of communication structures.
\begin{remark}
\label{rem:induced}
It is not clear whether all utilities $u:\Delta\parentheses*{\Omega \times \parentheses*{\Delta\parentheses*{\Omega}}^k}\to \mathbb{R}$ are, indeed, induced by some continuation game with incomplete information $G$ and some equilibrium selection rule $f$ (unless the game is trivial and all action profiles are equilibria). Whenever we shall prove superiority of one communication structure compared to another, the results will apply for arbitrary $u$. Whenever we shall prove absence of superiority, we shall explicitly construct a simple continuation game, even without extrenalities, that violates superiority.
\end{remark}
We first present our characterization result (Theorem~\ref{thm:order}), which fully characterizes the partial ordering $\preceq$ of communication structure by superiority; then we provide some motivating implications in Subsection~\ref{sub:implications} and a proof overview in Subsection~\ref{sub:characterization-pf}; Subsection~\ref{sub:net} discusses the application of our characterization result to receivers organized in a network. The following definition plays a central role in our result.
\begin{definition}
An ordered pair $\parentheses*{i_1,i_2}\in \brackets*{k}\times \brackets*{k}$ ($i_1\neq i_2$) is \emph{information-dominating} under a communication structure $M$ if $M_{i_1,j}\geq M_{i_2,j}$ for every channel $j\in \brackets*{n}$. In this case, we say that receiver $R_{i_1}$ \emph{information-dominates} receiver $R_{i_2}$.
\end{definition}
That is, $R_{i_1}$ information-dominates $R_{i_2}$ if $R_{i_1}$ observes all communication channels that $R_{i_2}$ does (and possibly more).
\begin{notation}
\emph{$S_M\subseteq \brackets*{k}\times \brackets*{k}$} is the set of all the information-dominating pairs under the communication structure $M$.
\end{notation}
The following theorem fully characterizes $\preceq$ via the sets of information-dominating pairs induced by every communication structure.
\begin{theorem}[Similar to~\citet{galperti2019belief}]
\label{thm:order}
For every two communication structures  $M_1,M_2\in \braces*{0,1}^{k\times n}$, $M_1\preceq M_2$ if and only if $S_{M_1}\supseteq S_{M_2}$.
\end{theorem}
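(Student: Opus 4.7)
I would prove both directions by understanding which joint posterior distributions $\mathcal{P}_M$ are realisable, since $u(P)=\mathbb{E}_P[v(\omega,f(p))]$ is linear in $P$ and so $\mathcal{P}_{M_1}\subseteq \mathcal{P}_{M_2}$ immediately yields $V(M_1,u)\le V(M_2,u)$ for every instance. Sufficiency ($S_{M_1}\supseteq S_{M_2}\Rightarrow M_1\preceq M_2$) reduces to simulating any $M_1$-scheme inside $M_2$ via secret sharing; necessity reduces to exhibiting, for each $(i_1,i_2)\in S_{M_2}\setminus S_{M_1}$, a single instance with $V(M_1,u)>V(M_2,u)$.

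\textbf{Sufficiency via secret sharing.} Given $\pi_1$ on $M_1$, I would think of each channel-$j$ signal $\sigma_j^{M_1}$ as a secret to be reconstructed by exactly the authorised set $T_j:=\{i:M_{1,i,j}=1\}$. For every $(j,i')$ with $i'\in T_j$, place an independent additive sharing of $\sigma_j^{M_1}$ over $i'$'s $M_2$-channels $O_{i'}^{M_2}:=\{j':M_{2,i',j'}=1\}$: i.i.d.\ uniform values in $[0,1)$ on all but one of them and, on the remaining channel, the value making the mod-one sum equal $\sigma_j^{M_1}$; use fresh independent randomness across $(j,i')$. Each $i'\in T_j$ then recovers $\sigma_j^{M_1}$ by summing what he sees on $O_{i'}^{M_2}$. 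The combinatorial core of the argument is that for every $i\notin T_j$ and $i'\in T_j$ one has $O_i^{M_2}\not\supseteq O_{i'}^{M_2}$: otherwise $(i,i')\in S_{M_2}\subseteq S_{M_1}$ would force $j\in O_i^{M_1}$, contradicting $i\notin T_j$. So $i$ is missing at least one share of every sharing of every secret he is not authorised for, and by independence of the randomness across sharings his view decomposes as ``the secrets $\{\sigma_j^{M_1}:j\in O_i^{M_1}\}$ he is authorised for, plus noise independent of $\omega$''. His $M_2$-posterior therefore equals his $M_1$-posterior and the induced $P$ lies in $\mathcal{P}_{M_2}$.

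\textbf{Necessity via a quadratic-loss game.} Fix $(i_1,i_2)\in S_{M_2}\setminus S_{M_1}$ and pick a witnessing channel $j^{*}\in O_{i_2}^{M_1}\setminus O_{i_1}^{M_1}$. Take $\Omega=\{0,1\}$ with uniform prior, continuous action spaces $A_i=[0,1]$, receiver utilities $u_i(a_i,\omega)=-(a_i-\omega)^2$ (so the unique best response is $a_i=p_i$, eliminating tie-breaking pathologies), constant utility for all receivers other than $i_1,i_2$, and sender utility $v(\omega,a_{i_1},a_{i_2})=(a_{i_1}-\omega)^2-(a_{i_2}-\omega)^2$. A short computation using $\mathbb{E}[\omega p_i]=\mathbb{E}[p_i^2]$ gives the indirect utility $u(P)=\mathrm{Var}(p_{i_2})-\mathrm{Var}(p_{i_1})$. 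In $M_2$, the inclusion $\sigma_{i_2}\subseteq\sigma_{i_1}$ together with the law of total variance gives $\mathrm{Var}(p_{i_1})\ge\mathrm{Var}(p_{i_2})$, so $V(M_2,u)\le 0$, attained at the no-information scheme. In $M_1$, sending $\sigma_{j^{*}}=\omega$ and uninformative constants on every other $M_1$-channel gives $p_{i_2}=\omega$ (variance $1/4$) and $p_{i_1}=1/2$ (variance $0$), so $V(M_1,u)\ge 1/4>0\ge V(M_2,u)$ and $M_1\not\preceq M_2$.

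\textbf{Main expected obstacle.} The delicate part is the sufficiency direction, where one must check that the simultaneous sharings of many correlated secrets $(\sigma_j^{M_1})_j$ leak nothing to unauthorised receivers — each $M_2$-channel carries many interleaved shares from different $(j,i')$-sharings, and the secrets themselves are jointly correlated through $\omega$. The standard facts that any strict subset of an additive sharing over $[0,1)$ (mod $1$) is a uniform independent tuple, and that the randomness across distinct $(j,i')$-sharings is independent, carry the argument; but one must be careful in writing down that an unauthorised receiver's view is \emph{jointly} independent of every secret he shouldn't see, not merely marginally.
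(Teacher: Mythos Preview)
Your proof is correct but takes a genuinely different route from the paper in both directions. For sufficiency, the paper argues by \emph{induction on $k$}: it peels off a receiver $R_i$ not information-dominated under $M_2$, applies the hypothesis to the remaining $k-1$ receivers, and then invokes two auxiliary lemmas to feed $R_i$ exactly the information he needs while hiding it from everyone else. Your one-shot construction---one additive sharing of $\sigma_j^{M_1}$ over $O_{i'}^{M_2}$ for each pair $(j,i')$ with $i'\in T_j$---is more direct and bypasses the inductive scaffolding; the combinatorial hinge that $i\notin T_j$ forces $O_i^{M_2}\not\supseteq O_{i'}^{M_2}$ is exactly right, and your diagnosis of the delicate step (joint rather than marginal independence of the unauthorised views from the secrets) is on target and goes through because distinct sharings use fresh keys, so conditional on $\sigma^{M_1}$ all incomplete partial views are jointly uniform. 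The paper's secret-sharing primitive also differs stylistically: it encrypts each target's signal with one key \emph{per eavesdropper}, routing that key along a channel the target sees but the eavesdropper does not, whereas you place one share \emph{per channel} in $O_{i'}^{M_2}$. For necessity, the paper constructs a binary-action instance relying on sender-favourable tie-breaking; your quadratic-loss game with unique best responses $a_i=p_i$ is cleaner, reducing the indirect utility to $\mathrm{Var}(p_{i_2})-\mathrm{Var}(p_{i_1})$ so that Blackwell monotonicity of posterior variance finishes the argument without any tie-breaking convention.
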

Moreover, the proof of Theorem~\ref{thm:order} immediately yields the following useful property:
\begin{proposition}
\label{pro:implement}
$M_1\preceq M_2$ if and only if $\mathcal{P}_{M_1}\subseteq\mathcal{P}_{M_2}$.
\end{proposition}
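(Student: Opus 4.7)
The plan is to prove the two directions separately, observing that the reverse direction is essentially a free consequence of the proof of Theorem~\ref{thm:order} as opposed to its statement.

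The forward direction ($\mathcal{P}_{M_1}\subseteq \mathcal{P}_{M_2}\Rightarrow M_1\preceq M_2$) is immediate from the definition of $V(M,u)$ as a supremum. Fix any Bayesian persuasion instance and let $u$ be the associated indirect sender's utility. Since $u$ is a function of $P$ alone (this is exactly the requirement we imposed on the equilibrium selection rule $f$, together with the definition of $u(P)$), enlarging the feasible set of $P$'s can only weakly increase the supremum, so $V(M_1,u)=\sup_{P\in \mathcal{P}_{M_1}} u(P)\leq \sup_{P\in \mathcal{P}_{M_2}} u(P)=V(M_2,u)$, giving $M_1\preceq M_2$ via Definition~\ref{def:superior}.

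For the reverse direction, I would first invoke Theorem~\ref{thm:order} to translate $M_1\preceq M_2$ into $S_{M_1}\supseteq S_{M_2}$, and then revisit the proof of the ``$S_{M_1}\supseteq S_{M_2}\Rightarrow M_1\preceq M_2$'' half of that theorem. The way that half is established is constructive: starting from an arbitrary signaling scheme $\pi$ in $M_1$, one builds (via the secret-sharing-style protocol described in the proof overview) a signaling scheme $\pi'$ in $M_2$ that induces exactly the same distribution $P\in \Delta(\Omega\times (\Delta(\Omega))^k)$ over states and profiles of posterior beliefs. Hence $P\in\mathcal{P}_{M_1}$ implies $P\in \mathcal{P}_{M_2}$, i.e.~$\mathcal{P}_{M_1}\subseteq\mathcal{P}_{M_2}$.

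The only real subtlety -- and the point the word ``immediately'' in the proposition's statement is gesturing at -- is that the proof of Theorem~\ref{thm:order} actually yields this pointwise set inclusion rather than merely an inequality of optimal values; once that is noticed, nothing else is needed. I therefore do not expect a substantial obstacle beyond a careful reading of the proof of Theorem~\ref{thm:order} to confirm that the distribution $P$ produced by $\pi'$ coincides with (not just dominates in utility) the one produced by $\pi$ for every fixed instance.
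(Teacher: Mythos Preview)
Your proposal is correct and follows essentially the same approach as the paper. The paper's proof of Theorem~\ref{thm:order} explicitly establishes the implication $S_{M_1}\supseteq S_{M_2}\Rightarrow \mathcal{P}_{M_1}\subseteq\mathcal{P}_{M_2}$ (stated verbatim at the start of the inductive argument in Appendix~\ref{app:characterization}), and combining this with the trivial direction $\mathcal{P}_{M_1}\subseteq\mathcal{P}_{M_2}\Rightarrow M_1\preceq M_2$ and the statement of Theorem~\ref{thm:order} yields the proposition exactly as you outline.
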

That is, a communication structure $M_2$ is superior over $M_1$ iff every distribution over posteriors implementable under $M_1$ is also implementable under $M_2$.
\subsection{Motivating Implications}
\label{sub:implications}
We present several immediate implications of Theorem~\ref{thm:order}. 
The first is a straightforward exercise of the definitions.
\begin{corollary}[Private persuasion is the optimal communication structure]
\label{cor:best}
Let $M^*$ be a communication structure corresponding to private persuasion.\footnote{E.g., $M^*_{i,j}=1$ if and only if $i,j\in\brackets*{k}$ and $i=j$.} Then for every communication structure $M\in \braces*{0,1}^{k\times n}$ it holds that $M\preceq M^*$. That is, private persuasion is optimal for the sender in any persuasion instance.
\end{corollary}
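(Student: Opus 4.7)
The plan is to apply Theorem~\ref{thm:order} directly, reducing the corollary to the observation that the information-dominating pair set of $M^*$ is empty. Concretely, under private persuasion each receiver $R_i$ observes exclusively his own dedicated channel $C_i$, so for any $i_1 \neq i_2$ we have $M^*_{i_1,i_2} = 0 < 1 = M^*_{i_2,i_2}$, violating the inequality $M^*_{i_1,j} \geq M^*_{i_2,j}$ required for information-domination at coordinate $j = i_2$. Hence $S_{M^*} = \emptyset$.

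With $S_{M^*} = \emptyset$ in hand, the containment $S_{M^*} \subseteq S_M$ holds trivially for every communication structure $M \in \braces*{0,1}^{k \times n}$. Invoking the ``if'' direction of Theorem~\ref{thm:order} then yields $M \preceq M^*$, which is exactly the claim of the corollary.

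The only minor subtlety is a dimensional one: Theorem~\ref{thm:order} is phrased for two matrices in $\braces*{0,1}^{k \times n}$, whereas the footnote presents $M^*$ most naturally as a $k \times k$ matrix. To put both structures into a common space, I would pad each of $M$ and $M^*$ with all-zero columns up to $\max\braces*{n,k}$ channels. Adding a channel observed by no receiver introduces no signals anyone can see, so it affects neither $\mathcal{P}_{M}$, $\mathcal{P}_{M^*}$, nor the information-dominating sets $S_M$, $S_{M^*}$; the conclusion of Theorem~\ref{thm:order} is therefore preserved under this padding, completing the argument. There is no hard step here — the only ``work'' is noticing that private persuasion is the unique (up to equivalence) communication structure with no information-dominating pairs, which is exactly why Theorem~\ref{thm:order} makes it universally optimal.
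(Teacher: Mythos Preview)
Your proposal is correct and follows essentially the same approach as the paper: observe that $S_{M^*}=\emptyset$, so $S_{M^*}\subseteq S_M$ trivially for every $M$, and apply Theorem~\ref{thm:order}. Your additional remarks (the explicit verification that no pair is information-dominating under $M^*$ and the padding argument to reconcile the $k\times k$ and $k\times n$ dimensions) are sound elaborations that the paper omits.
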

Indeed, for $M^*$ we have $S_{M^*}=\emptyset$; hence,  $S_{M^*} \subseteq S_M$ for every $M$, which implies by Theorem~\ref{thm:order} that $M \preceq M^*$. Corollary~\ref{cor:best} can alternatively be observed through Proposition~\ref{pro:implement}, since any information sent via some communication structure can be sent privately to the receivers for whom it is intended.

More surprisingly, it turns out that \emph{every} communication structure in which no two receivers are comparable in terms of the information they receive is equivalent to private persuasion in terms of the sender's optimal utility; in particular, all such communication structures are optimal for the sender.
\begin{corollary}[Communication structures with no information-dominating pairs are optimal]
\label{cor:optimal}
Let $M^*$ be a communication structure corresponding to private persuasion.
If a communication structure $M$ has no information-dominating pairs, then $M^*\preceq M$. That is, $M$ is optimal for the sender.
\end{corollary}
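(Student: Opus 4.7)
The plan is to show that Corollary~\ref{cor:optimal} is essentially immediate from Theorem~\ref{thm:order}, once we verify that private persuasion itself has no information-dominating pairs. By the footnote's description of $M^*$ (e.g., $M^*_{i,j}=1$ iff $i=j\in[k]$), each receiver $R_i$ observes only his dedicated channel $C_i$, which no other receiver observes. Hence for any $i_1\neq i_2$, the channel $C_{i_2}$ is seen by $R_{i_2}$ but not by $R_{i_1}$, so $R_{i_1}$ fails to information-dominate $R_{i_2}$. Therefore $S_{M^*}=\emptyset$.

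Now the hypothesis of the corollary says precisely that $S_M=\emptyset$. Consequently $S_{M^*}=\emptyset\supseteq\emptyset=S_M$, and Theorem~\ref{thm:order} (applied with $M_1=M^*$ and $M_2=M$) yields $M^*\preceq M$, which is the desired inequality $V(M,u)\ge V(M^*,u)$ across all instances. Combining this with Corollary~\ref{cor:best}, which gives $M\preceq M^*$ universally, we conclude $V(M,u)=V(M^*,u)$ for every persuasion instance, so $M$ is optimal for the sender in the strongest possible (instance-independent) sense.

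Because the argument is a one-line invocation of the characterization theorem, there is no real obstacle to the proof itself; the substantive content sits entirely inside Theorem~\ref{thm:order}. The only thing that needs a moment of care is the verification $S_{M^*}=\emptyset$, but this is a direct reading of the definition of a private communication structure. A complementary sanity check, via Proposition~\ref{pro:implement}, is that any distribution over posteriors that can be induced under $M$ can also be induced under $M^*$ (by Corollary~\ref{cor:best}), and Theorem~\ref{thm:order} guarantees the converse whenever $S_M\subseteq S_{M^*}$; the two containments together confirm $\mathcal{P}_M=\mathcal{P}_{M^*}$.
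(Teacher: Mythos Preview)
Your derivation is correct and mirrors the paper's own treatment: Corollary~\ref{cor:optimal} is presented in Section~\ref{sub:implications} as an immediate consequence of Theorem~\ref{thm:order}, precisely via the observation that $S_M=S_{M^*}=\emptyset$. One subtlety worth noting is that in Appendix~\ref{app:characterization} the paper also establishes the corollary \emph{directly} from Lemma~\ref{lem:tech} (taking $I=[k]$, the secret-sharing construction), since Corollary~\ref{cor:optimal} serves as the induction base in the proof of Theorem~\ref{thm:order} itself and therefore cannot, without circularity, be derived from it.
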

In fact, by Proposition~\ref{pro:implement}, every distribution $P$ over states and posteriors induced by some signaling scheme under private persuasion can be induced by an appropriate signaling scheme under $M$, making such communication structures fully equivalent to private persuasion.

The next proposition shows that private persuasion for $k$ receivers can be implemented using only $O\parentheses*{\log k}$ communication channels. That is, the sender can ensure the same optimal expected utility (regardless of the persuasion instance) as she can by private persuasion after an \emph{exponential} reduction in the number of communication channels. Our construction matches the information-theoretical lower bound on the number of channels (not only asymptotically, but exactly).
\begin{proposition}
\label{pro:private}
For $k$ receivers, let $m=m\parentheses*{k}$ be the smallest integer such that ${m \choose \lfloor \frac{m}{2} \rfloor}\geq k$.\footnote{Note that by Stirling's approximation, $m=\Theta\parentheses*{\log k}$.} Then the communication structure with $n=m$ channels in which each receiver observes a different subset of size $\lfloor \frac{m}{2} \rfloor$ of the channels is equivalent to private persuasion. Moreover, $m\parentheses*{k}$ is the minimal number of channels sufficient to implement a communication structure equivalent to private persuasion for $k$ receivers.
\end{proposition}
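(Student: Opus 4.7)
The plan is to derive both directions of Proposition~\ref{pro:private} from Corollary~\ref{cor:optimal} combined with Sperner's theorem. Recall that by Corollary~\ref{cor:optimal} (which itself rests on Theorem~\ref{thm:order} applied with $S_{M^*}=\emptyset$), a communication structure $M \in \{0,1\}^{k \times n}$ is equivalent to private persuasion precisely when it has no information-dominating pairs. Identifying the $i$-th row of $M$ with the subset $T_i := \{j \in [n] : M_{i,j} = 1\}$, the condition $S_M = \emptyset$ translates exactly to: the sets $T_1, \ldots, T_k$ are pairwise distinct and no $T_i$ is contained in $T_{i'}$ for $i \neq i'$ -- in other words, $\{T_1, \ldots, T_k\}$ is a Sperner family (antichain) of size $k$ in $2^{[n]}$.

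For the upper bound (sufficiency of $m=m(k)$ channels), I would explicitly construct such a structure. Since $\binom{m}{\lfloor m/2 \rfloor} \geq k$ by the definition of $m(k)$, one can choose $k$ distinct subsets $T_1, \ldots, T_k \subseteq [m]$ each of cardinality exactly $\lfloor m/2 \rfloor$. Any two distinct sets of the same size are incomparable under inclusion, so no information-dominating pair exists. Corollary~\ref{cor:optimal} then gives $M^* \preceq M$, while Corollary~\ref{cor:best} yields $M \preceq M^*$, so the two structures induce the same $\mathcal{P}$ and thus the same optimal sender utility on every instance.

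For the lower bound (necessity), suppose $M \in \{0,1\}^{k \times n}$ is equivalent to private persuasion. Then in particular $M^* \preceq M$, which by Theorem~\ref{thm:order} forces $S_M \subseteq S_{M^*} = \emptyset$. Hence the family $\{T_1, \ldots, T_k\}$ must be an antichain of $k$ distinct subsets of $[n]$. Sperner's theorem bounds the size of any antichain in $2^{[n]}$ by $\binom{n}{\lfloor n/2 \rfloor}$, so $\binom{n}{\lfloor n/2 \rfloor} \geq k$; by the minimality in the definition of $m(k)$, this gives $n \geq m(k)$.

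I do not anticipate a real obstacle: the only nontrivial ingredient is Sperner's theorem, which is precisely tailored to this antichain-counting question, and the reduction from ``equivalent to private persuasion'' to ``rows form a Sperner family'' is immediate once Theorem~\ref{thm:order} and Corollary~\ref{cor:optimal} are in hand. The slight subtlety worth flagging in the write-up is that duplicate rows also create information-dominating pairs (in both directions), which is why the antichain is required to consist of $k$ \emph{distinct} sets rather than merely $k$ sets with no strict containments.
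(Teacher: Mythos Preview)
Your proposal is correct and follows essentially the same argument as the paper: both use Corollary~\ref{cor:optimal} to show the $\lfloor m/2\rfloor$-uniform construction has no information-dominating pairs, and both invoke Sperner's theorem together with Theorem~\ref{thm:order} to show fewer than $m(k)$ channels force an information-dominating pair. Your write-up is slightly more explicit (spelling out the antichain translation and the duplicate-rows subtlety), but the ideas coincide.
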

\begin{proof}
The first part of the proposition immediately follows from Corollary~\ref{cor:optimal}, as the suggested communication structure has no information-dominating pairs. For the second part, we use Sperner's theorem~\cite{sperner1928satz}, which states that the maximum size of a family of subsets of $\brackets*{m}$ such that no subset contains another is ${m \choose \lfloor \frac{m}{2} \rfloor}$. Therefore, any communication structure using less than $m\parentheses*{k}$ channels has an information-dominating pair, and by Theorem~\ref{thm:order} -- is not equivalent to private persuasion.
\end{proof}
\subsection{Proof Overview of the Characterization Result}
\label{sub:characterization-pf}
We shall present here the key ideas of Theorem~\ref{thm:order} proof; for the full proof, please see Appendix~\ref{app:characterization}.
\begin{proof}[Proof idea of Theorem~\ref{thm:order}]
If $M_1\preceq M_2$, then it is somewhat expected that $S_{M_1}\supseteq S_{M_2}$, as Blackwell-comparability of the information the receivers obtain imposes severe limitations on the distributions over posteriors implementable by signaling schemes. Suppose, e.g., that the first receiver is always (weakly) more informed than the second one. Then, for appropriately chosen utility functions (even with no externalities) and given an optimal strategy of the second receiver, the sender cannot misalign this strategy with the first receiver's strategy.

The second direction is more surprising. Here we briefly explain why Corollary~\ref{cor:optimal} holds, and Theorem~\ref{thm:order} follows from an inductive generalization of the idea we describe. Suppose, e.g., that the sender wishes to transmit a private signal to receiver $R_1$ using a communication channel $C$ shared with $l$ more receivers. Then she can encrypt the signal with $l$ i.i.d.~uniform keys, sending each key via a communication channel observed by $R_1$, but not by the corresponding receiver among the $l$~additional ones (such a channel necessarily exists by the assumption in Corollary~\ref{cor:optimal}). Then $R_1$ will be able to decrypt his signal, while the other receivers will only see some random i.i.d.~values. This communication can be seen as applying a secret sharing protocol in which the ``agents'' who share the secret are the communication channels that $R_1$ observes. A careful application of the same idea allows us to handle the case of general setting instances, in which the receivers also play a Bayesian game with incomplete information. We overcome the technical challenges imposed by externalities using Harsanyi's universal type~spaces (see Appendix~\ref{app:univarsal} for details). 
\end{proof}
\subsection{Receivers on a Network}
\label{sub:net}
In this subsection, we discuss the implications of our model to the setting of~\citet{kerman2021persuading} with receivers on a network. In particular, some of their results that show that certain communication structures are equivalent to private persuasion for specific utility functions immediately follow from our work.

\vspace{2mm}
\noindent{\bf The setting.} Fix an undirected graph $G=\parentheses*{\brackets*{k}, E}$, the vertices of which represent the $k$ receivers and the edges of which represent social or other ties. Define $n=k$ communication channels and a communication structure $M_G$ such that ${(M_G)}_{i,j}=1$ ($i,j\in\brackets*{k}$) if and only if either $j=i$ or $\braces*{i,j}\in E$. That is, the $i$-th communication channel is observed by $R_i$ and by its neighbours in $G$. A possible motivation for this model is communication on social media.

\vspace{2mm}
\noindent{\bf Implications of Corollary~\ref{cor:optimal}.} It turns out that many natural classes of graphs $\{G\}$ yield communication structures $\{M_G\}$ with no information-dominating pairs. The intuitive reason is that no receiver's social ties are overshadowed in full by those of a friend. We exploit this fact
to immediately deduce several results of~\cite{kerman2021persuading}, the original proofs of which were technically involved.
\begin{condition}
\label{cond:private}
For every $\parentheses*{i_1,i_2}\in \brackets*{k}\times\brackets*{k}$ such that $\braces*{i_1,i_2}\in E$, there exists $i'\in \brackets*{k}\setminus \braces*{i_1,i_2}$ for which $\braces*{i_1,i'}\notin E$, but $\braces*{i_2,i'}\in E$.
\end{condition}
Whenever $G$ satisfies Condition~\ref{cond:private}, one can apply Corollary~\ref{cor:optimal} to deduce that $M_G$ is equivalent to private persuasion. Indeed, an information-dominating pair $\parentheses*{i_1,i_2}$ must satisfy that $\braces*{i_1,i_2}\in E$ (otherwise -- the channel observed by $R_{i_2}$ and his neighbours is not observed by $R_{i_1}$); however, the $i'$-th channel is observed by $R_{i_2}$, but not by $R_{i_1}$. We present two examples.
\begin{example}
\label{ex:circle}
Suppose that $G$ is a \emph{circle} -- a connected graph with all degrees equal to $2$ (Proposition~$5.4$ from~\cite{kerman2021persuading}). Provided that $k>3$, Condition~\ref{cond:private} holds and one can apply Corollary~\ref{cor:optimal}.
\end{example}
\begin{example}
\label{ex:grid}
Suppose that $G$ is obtained from a $2$-dimensional grid by removing any number of edges such that the degree of each vertex remains at least $2$ (a generalization of Example~$5.9$ from~\cite{kerman2021persuading}).\footnote{Formally, $G$ is a \emph{$2$-dimensional grid} if $k=k_1\times k_2$ for some positive integers $k_1, k_2$, each vertex of $G$ is labeled by another pair $\parentheses*{l_1,l_2}\in\brackets*{k_1}\times\brackets*{k_2}$ and $\braces*{\parentheses*{l_1,l_2},\parentheses*{l_1',l_2'}}\in E$ if and only if $\absolute*{l_1'-l_1}+\absolute*{l_2'-l_2}=1$.} Again, Condition~\ref{cond:private} holds and Corollary~\ref{cor:optimal} applies.
\end{example}
\section{Algorithmic Multi-Channel Persuasion}
\label{sec:alg}
In this section, we show that finding an approximately optimal signaling scheme is computationally tractable when the communication structure is represented by a directed forest (Subsection~\ref{sub:FPTAS}); however, it is not the case for general communication structures, even for simple sender's utility functions (Subsection~\ref{sub:hardness}). We assume for this section that the receivers have no externalities -- that is, their utility is affected only by their own action and true state of nature. Moreover, we focus on a constant number of states $\absolute*{\Omega}$.\footnote{One can consider it a special case of $\Omega=\brackets*{0,1}$, with the prior being supported only on finitely many points of $\Omega$.}
\subsection{Approximate Tractability of Forest Communication Structures}
\label{sub:FPTAS}
In this subsection, we focus on communication structures satisfying the following condition.
\begin{definition}
\label{def:forest}
Given a communication structure $M$, consider the directed graph $G^M$ with the vertex set being $\brackets*{k}$ and $\parentheses*{i_1,i_2}\in \brackets*{k}\times\brackets*{k}$ being an edge iff $\parentheses*{i_1,i_2}$ is an information-dominating pair and there is no $i_3\in\brackets*{k}\setminus\braces*{i_1,i_2}$ s.t.~both $\parentheses*{i_1,i_3}$ and $\parentheses*{i_3,i_2}$ are information-dominating pairs. We say the setting has a \emph{forest communication structure} if $G^M$ is a directed forest. 
\end{definition}
We consider \emph{additive} sender’s utility functions.  That is, we assume that there exist functions $u^1,\ldots,u^k:\Delta\parentheses*{\Omega}\to\mathbb{R}$ for which $u\parentheses*{p_{\sigma}}=\sum_{i=1}^{k} u^i\parentheses*{p_{\sigma}^i}$, where $p_{\sigma}^i\in\Delta\parentheses*{\Omega}$ is the marginal posterior distribution induced on $R_i$ by $p_{\sigma}$. We further assume that the sender has an oracle access to $u^1,\ldots,u^k$. Moreover, we focus on either $O\parentheses*{1}$-Lipschitz or piecewise constant functions $u^1,\ldots,u^k$. The former is a standard assumption for receivers with continuum-sized action spaces, while the latter captures the case of finite-action-space receivers~\cite{dughmi2017algorithmic}. Under these conditions, we provide an additive FPTAS for the optimal signaling scheme.
\begin{theorem}
\label{thm:FPTAS}
Consider a forest communication structure $M$, constant $\absolute*{\Omega}$ and sender's utility function given by $u\parentheses*{p_{\sigma}}=\sum_{i=1}^{k} u^i\parentheses*{p_{\sigma}^i}$ for some $u^1,\ldots,u^k:\Delta\parentheses*{\Omega}\to\mathbb{R}$, such that each one of them is either $O\parentheses*{1}$-Lipschitz or piecewise constant. Then there exists an algorithm outputting, for every $\epsilon>0$, an $\epsilon$-additively optimal signaling scheme in $\poly\parentheses*{k,\frac{1}{\epsilon}}$-time.\footnote{Specifically, our algorithm outputs a table specifying, for each state of nature, which signal should be sent with which probability.}
\end{theorem}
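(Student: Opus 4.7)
The plan is to reformulate the sender's problem as a polynomial-size linear program over discretized posterior distributions, exploiting the forest structure of $G^M$ to reduce joint feasibility to pairwise Blackwell constraints on tree edges.

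First I would root each tree of $G^M$ at its unique $G^M$-root---i.e., the most-informed receiver (the unique node with no incoming edge)---so that every edge points from a parent (dominator) to a child (dominated). Any signaling scheme then induces a joint posterior distribution whose marginals $\mathcal{D}_i$ have mean equal to the prior $p$ and, along each tree edge $(i, c)$, satisfy the coarsening relation $\mathbb{E}[q_i \mid q_c] = q_c$; equivalently, $\mathcal{D}_c$ is a mean-preserving contraction of $\mathcal{D}_i$. Conversely, any collection of marginals and edge-couplings satisfying these pairwise conditions is implementable: this follows from Proposition~\ref{pro:implement} together with the secret-sharing construction from the proof of Theorem~\ref{thm:order} and the tree-Markov property (composing MPCs along ancestor-descendant paths automatically yields all higher-order pairwise constraints in $S_M$), matching the forest-hierarchy characterization of~\citet{brooks2019information}.

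Next I would discretize the posterior simplex into a regular grid $\mathcal{N} \subset \Delta(\Omega)$ of spacing $\epsilon' = \Theta(\epsilon/k)$; since $|\Omega|$ is constant, $|\mathcal{N}| = \poly(k/\epsilon)$. The linear program has variables $w_i(q)$ for $i \in [k], q \in \mathcal{N}$ (marginal weights) and $w_{i,c}(q, q')$ for each tree edge $(i, c)$ and $q, q' \in \mathcal{N}$ (edge-joint weights), subject to non-negativity, normalization, marginal consistency $\sum_{q'} w_{i,c}(q, q') = w_i(q)$ and $\sum_q w_{i,c}(q, q') = w_c(q')$, Bayes plausibility $\sum_q q\, w_i(q) = p$ at each tree root, and the coarsening condition $\sum_q q\, w_{i,c}(q, q') = q'\, w_c(q')$ for every edge $(i, c)$ and $q' \in \mathcal{N}$. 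The objective $\max \sum_i \sum_q w_i(q)\, u^i(q)$ is evaluated via oracle access to the $u^i$. The LP has $\poly(k, 1/\epsilon)$ variables and constraints and is solvable in polynomial time; a signaling scheme realizing its optimum is recovered by sampling $q_r$ from $w_r$ at each tree root and then, edge by edge top-down, sampling $q_c$ from the conditional $w_{i,c}(\cdot, q')/w_c(q')$, delivered via the canonical channel structure.

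For the approximation guarantee, LP $\leq$ OPT is immediate since any LP-feasible solution yields a valid signaling scheme on $M$. For OPT $\leq$ LP $+ \epsilon$, I would start from a near-optimal continuous signaling scheme and round each realized posterior onto $\mathcal{N}$ via a Carath\'eodory-style mean-preserving decomposition: every $q \in \Delta(\Omega)$ is written as a convex combination of at most $|\Omega|$ grid points within distance $O(\epsilon')$ whose mean is $q$, and one is sampled with those weights. Performed top-down from each $G^M$-root so that the rounded $\tilde q_c$ on every edge is drawn jointly with the already-rounded $\tilde q_i$ of its dominator, this preserves Bayes plausibility at roots and the per-edge coupling $\mathbb{E}[\tilde q_i \mid \tilde q_c] = \tilde q_c$. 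By the $O(1)$-Lipschitz assumption, the expected per-realization loss at each receiver is $O(\epsilon')$, for a total loss of $O(k\epsilon') = O(\epsilon)$. The main obstacle is precisely this rounding step: the decomposition at each receiver must be chosen so that all downstream conditional-expectation constraints survive simultaneously, which the tree-sequential rounding is designed to achieve. The piecewise-constant case brings an additional subtlety, since rounding can cross discontinuities; I would absorb this by augmenting $\mathcal{N}$ with the polynomially many vertices of the polyhedral best-response pieces and applying the standard small boundary perturbation of~\cite{dughmi2017algorithmic}, losing at most $\epsilon/k$ per receiver.
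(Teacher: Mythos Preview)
Your overall plan coincides with the paper's: reduce feasibility to pairwise mean-preserving-spread constraints along the forest (via the~\cite{brooks2019information} characterization), discretize the simplex, write a polynomial-size LP with marginal and edge-coupling variables, and then argue that the optimum of the continuous problem rounds onto the grid with small loss. Your LP is in fact tighter than the paper's---you impose coupling constraints only on tree edges rather than on all of $S_M$---but by transitivity of MPS these are equivalent, and your $\epsilon/k$ grid spacing makes the loss bookkeeping cleaner.

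The gap is in the rounding step, which you rightly flag as the crux but do not actually carry out. Your Carath\'eodory decomposition guarantees $\mathbb{E}[\tilde q\mid q]=q$, i.e., the rounded marginal $\tilde P$ is a mean-preserving \emph{spread} of $P$. Applied at a dominator $i$ this is fine: from $\tilde P_i$ MPS $P_i$ MPS $P_c$ you get $\tilde P_i$ MPS $P_c$. But applied at the child $c$ it goes the wrong way: you obtain $\tilde P_c$ MPS $P_c$, and from $\tilde P_i$ MPS $P_c$ together with $\tilde P_c$ MPS $P_c$ there is no reason to conclude $\tilde P_i$ MPS $\tilde P_c$. Equivalently, after rounding the parent independently, your computation of $\mathbb{E}[\tilde q_i\mid \tilde q_c]$ collapses to $\mathbb{E}[q_c\mid \tilde q_c]$, which equals $\tilde q_c$ only if the child rounding satisfies the \emph{reverse} martingale identity---something Carath\'eodory rounding does not provide. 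Your phrase ``drawn jointly with the already-rounded $\tilde q_i$'' does not specify a coupling that repairs this. (In the binary-state case the Carath\'eodory kernel happens to preserve the convex order because its adjoint maps convex functions to convex functions; for $\absolute{\Omega}>2$ this depends on the triangulation and is not automatic.)

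The paper handles the rounding differently: it traverses the forest in preorder, snaps each $P_i$ to a \emph{single} grid vertex of its hypercube (not a mean-preserving mixture), and simultaneously \emph{translates} all descendant distributions by the same vector, which preserves the MPS relation between $R_i$ and every descendant. This breaks Bayes-plausibility and the MPS relation with ancestors by at most $O(\epsilon)$ in mass, and the paper repairs that by shipping an $O(\epsilon)$ fraction of mass to the extreme points of $\Delta(\Omega)$. That shift-and-fix maneuver is the missing ingredient in your proposal; your per-receiver Carath\'eodory decomposition would need an additional argument (e.g., a triangulation whose dual kernel preserves convexity) to close the gap.
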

\begin{proof}[Proof idea of Theorem~\ref{thm:FPTAS}]
Consider a forest communication structure $M$. The fact that $G^M$ is a forest implies that given marginal distributions over posteriors $P_1,\ldots,P_k$ for $R_1,\ldots,R_k$, respectively, there exists a signaling scheme $\pi$ inducing $P_1,\ldots,P_k$ if and only if for every $\parentheses*{i_1,i_2}\in S_M$, $P_{i_1}$ is a  mean-preserving spread\footnote{Given two distributions over a compact non-empty set $S\subseteq \mathbb{R}^{d}$ for some $d\geq 1$ with CDFs $F$ and $G$, $G$ is a mean-preserving spread of $F$ if for random variables $X_F\sim F, X_G\sim G$ it holds that: (1)~$\mathbb{E}\brackets*{X_F}=\mathbb{E}\brackets*{X_G}$, i.e.~$X_F,X_G$ have equal means; (2)~$X_G$ has the same distribution as $X_F+Y$, where $Y$ is a random variable with $Y|X_F\equiv 0$ (see, e.g.,~\cite{Mean}).
For $d=1$, this is equivalent to the two distributions having the same mean and $F$ second-order stochastic dominating the spread $G$.} of $P_{i_2}$ and $P_1,\ldots,P_k$ are Bayes-plausible\footnote{A distribution over posteriors $P$ is \emph{Bayes-plausible}~\cite{kamenica2011bayesian} if $\mathbb{E}_{p_{\sigma}\sim P}\brackets*{p_{\sigma}}=p$. By~\cite{Blackwell53,aumann1995repeated}, for a single receiver, a distribution over posteriors is implementable as a signaling scheme if and only if it is Bayes-plausible.} 
(see e.g.,~\citet{brooks2019information}). Given $\epsilon>0$ and assuming, for simplicity, that $\epsilon$ is a reciprocal of an integer, let us discretize the problem. Specifically, let us restrict the marginal posteriors that a signaling scheme is allowed to induce to the grid $W:=\braces*{l\cdot\epsilon}_{0\leq l\leq \frac{1}{\epsilon}}^{\absolute*{\Omega}}$. Define, for every $w\in W, i\in \brackets*{k}$, a variable $x_{w}^{i}$ specifying which probability mass $P_i$ assigns to the posterior $w$. Define further for every $w_1,w_2\in W, i_1,i_2\in\brackets*{k}$ such that $\parentheses*{i_1,i_2}\in S_M$, a variable $y_{w_1,w_2}^{i_1,i_2}$ specifying which probability mass is moved from $w_2$ in $P_{i_2}$ to $w_1$ in $P_{i_1}$. Adding constraints ensuring Bayes-plausibility, equivalence of the probability mass moved from/to each posterior $w$ for each receiver $i$ to $x_{w,i}$ and conservation of the mean results in a polynomial-sized linear program (LP); solving it yields the marginal distributions of an approximately optimal signaling scheme and the signaling scheme can be deduced from a result of~\cite{aumann1995repeated}.
\end{proof}
For the full proof, please see Appendix~\ref{app:FPTAS}. The exact tractability of forest communication structures is an interesting open question.
\subsection{Intractability of General Multi-Channel Persuasion}
\label{sub:hardness}
In this subsection, we introduce \emph{separable supermajority} sender's utility functions, which are obtained (up to some constants) by taking a partition of the set of receivers and summing supermajority functions corresponding to different elements of the partition. We assume that all the receivers are binary-action, with $A_1=\ldots=A_k=\braces*{0,1}$. Moreover, we consider a binary state of nature space $\Omega=\braces*{0,1}$.
\begin{definition}
\label{def:supermaj}
Sender's utility function $u$ is a \emph{separable supermajority} function if there exists a partition $T=\cupdot_{l\in\brackets*{k'}} T_l$ of $\brackets*{k}$ (for some positive integer $k'$), non-negative numbers $r_1,\ldots,r_{k'}\in\mathbb{R}_{\geq 0}$ and non-negative integers $t_1,\ldots,t_{k'}\in\mathbb{Z}_{\geq 0}$ such that $u\parentheses*{p_{\sigma}}=\sum_{l=1}^{k'} r_l\cdot U_{t_l}\parentheses*{p_{\sigma}^i:i\in T_l}$, where $U_{t_l}$ gets $1$ if at least $t_l$ receivers with indices from $T_l$ take action $1$ and $0$ otherwise.\footnote{In other words, each $U_{t_l}$ is a supermajority function for $T_l$.}
\end{definition}
Assume that the sender has an explicit formula for each $U_{t_l}$, and she can evaluate each such function at every point in a constant time. Then finding an optimal signaling scheme for \emph{public} persuasion without externalities is computationally tractable by finding the concavification of (the piecewise-constant function) $u$ and evaluating it at $p_{\sigma}=p$~\cite{kamenica2011bayesian}. Moreover, by~\cite{arieli2019private}, optimal \emph{private} persuasion is computationally tractable for a supermajority sender's utility function; as there are no externalities, considering each partition element separately implies that it is also tractable for separable supermajority functions. We shall prove that it is intractable in the general multi-channel persuasion setting, unless $P=NP$.
\begin{theorem}
\label{thm:hardness}
Consider $\Omega=\braces*{0,1}$ and a separable supermajority sender's utility function $u\parentheses*{\cdot}$ given by an explicit formula. Then, unless $P= NP$, there is no $\poly\parentheses*{k}$-time algorithm that, given a communication structure $M$ as an input, outputs an optimal signaling scheme.
\end{theorem}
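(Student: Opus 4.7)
The plan is NP-hardness by polynomial-time reduction from a standard combinatorial problem whose ``local constraints on subsets of variables'' structure can be mirrored by a communication matrix -- natural candidates are \textsc{Independent Set}, \textsc{Max-Coverage}, or \textsc{3-SAT}. From an instance $\mathcal{I}$ of the source problem I would build a persuasion instance with $\Omega=\braces*{0,1}$, uniform prior, receiver utilities of the form $u_i\parentheses*{\omega,a_i}=\mathbbm{1}\brackets*{a_i=\omega}$ (so each receiver takes action~$1$ iff its posterior on $\omega=1$ is at least $\tfrac12$), and a communication structure $M$ that encodes the incidence pattern of $\mathcal{I}$: roughly, one channel per ``ground element'' and one receiver per ``local object,'' with $R_i$ observing exactly the channels of the elements it contains. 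The separable supermajority objective $u=\sum_l r_l U_{t_l}$ is then chosen so that each group $T_l$ corresponds to one local constraint of $\mathcal{I}$, the threshold $t_l$ reflects its arity, and the weights $r_l$ are scaled so that the sender's optimum equals a monotone function of the optimal value of $\mathcal{I}$.

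The key analytic step uses Proposition~\ref{pro:implement}: the joint posterior distributions implementable under $M$ are exactly those Bayes-plausible distributions that, for every $\parentheses*{i_1,i_2}\in S_M$, place $R_{i_1}$'s marginal as a mean-preserving spread of $R_{i_2}$'s. Because each $U_{t_l}$ depends only on which receivers cross the $\tfrac12$-threshold and $u$ is additive across $l$, the sender's problem decomposes into coordinating threshold-crossing events across the information-dominating graph of $M$. I would then argue that optimal signaling is supported on only polynomially many ``extreme'' posterior profiles (a Carath\'eodory-style argument on the polytope of implementable joint distributions defined by the MPS constraints on $S_M$, exploiting the constancy of $\absolute*{\Omega}$), so that a signaling scheme can be identified with a combinatorial assignment. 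Completeness would exhibit, from any YES-instance of $\mathcal{I}$, a signaling scheme achieving sender value at least some explicit $V_{\text{YES}}$; soundness would decode any scheme of value exceeding some $V_{\text{NO}}<V_{\text{YES}}$ into a solution of $\mathcal{I}$, with the gap kept inverse-polynomial by scaling the $r_l$'s.

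The main obstacle, I expect, is to make the multi-channel structure \emph{essential} -- that is, to build $M$ so that the very same utility $u$ remains tractable under both public persuasion (via concavification over the single scalar posterior) and under private persuasion (applying the supermajority algorithm of \citet{arieli2019private} group-by-group, since $u$ is separable), yet becomes hard under~$M$. This rules out any reduction that collapses for trivial communication matrices and forces one to place receivers with carefully overlapping observation sets so that the MPS constraints imposed by $S_M$ couple the groups $T_l$ in a combinatorially nontrivial way -- e.g., via ``witness'' receivers that information-dominate several neighbourhood receivers and whose threshold-crossing event is precisely what the source constraint asks about. A secondary technical hurdle is controlling fractional/hedged posteriors in the soundness direction -- since each $U_{t_l}$ is only piecewise constant rather than concave in the posteriors, one must show that rounding such schemes to the extreme posterior profiles loses at most $V_{\text{YES}}-V_{\text{NO}}$, so that the combinatorial decoding is unaffected.
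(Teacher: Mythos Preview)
Your proposal has the right high-level shape -- encode combinatorial constraints through the communication matrix and read off a solution from an optimal scheme -- but it contains a genuine technical error and misses the simple mechanism that makes the paper's reduction work.

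The error: you invoke Proposition~\ref{pro:implement} as asserting that the implementable joint posteriors under $M$ are \emph{exactly} the Bayes-plausible distributions satisfying the mean-preserving-spread constraints along $S_M$. Proposition~\ref{pro:implement} says nothing of the sort; it only states that $\mathcal{P}_{M_1}\subseteq\mathcal{P}_{M_2}$ iff $M_1\preceq M_2$. The MPS characterization you have in mind is the result of \citet{brooks2019information} used in Subsection~\ref{sub:FPTAS}, and it is stated there \emph{only for forest} information-domination graphs. A hardness instance will certainly not be a forest (Theorem~\ref{thm:FPTAS} gives an FPTAS in that case), so you cannot use that characterization to describe the feasible set. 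This undermines both your ``decomposition'' step and the Carath\'eodory/rounding soundness argument, which are built on top of it.

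The paper's reduction is far more direct and avoids all of this machinery. It reduces from \textsc{Minimum $b$-Union}: given $Q_1,\ldots,Q_t\subseteq\brackets*{w}$, introduce $t$ channels and $w+t$ receivers. Each $R_{Q_j}$ observes only channel $j$; each element-receiver $R_i$ observes every channel $j$ with $i\in Q_j$, so $R_i$ information-dominates $R_{Q_j}$ whenever $i\in Q_j$. The separable-supermajority utility puts weight $4w$ on ``at least $b$ of the $R_{Q_j}$ are certain that $\omega=1$'' and a unit penalty for each $R_i$ whose posterior on $\omega=1$ exceeds $0.9$. The whole analysis is then one line: revealing $\omega=1$ on channel $j$ necessarily reveals it to every $R_i$ with $i\in Q_j$ (by information-domination), so choosing $b$ channels to reveal on is exactly choosing $b$ sets whose union you pay for. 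No MPS polytope, no Carath\'eodory step, and no rounding are needed -- the threshold $0.9$ and the weight gap $4w$ versus $1$ immediately force the optimum to be deterministic on the relevant channels. Your ``witness receivers that information-dominate several neighbourhood receivers'' is the right instinct, but you never commit to a concrete source problem or construction, and the analytic scaffolding you erect is both unnecessary and, as noted, not valid outside the forest case.
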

\begin{proof}[Proof idea of Theorem~\ref{thm:hardness}]
The proof is based on reducing the minimum $b$-union problem~\cite{vinterbo2002note} to our setting. Specifically, given a universe $\braces*{1,\ldots,w}$ and sets $Q_1,\ldots,Q_t$, we 
consider a multi-channel instance with $t$ communication channels and $w+t$ receivers who are denoted by  $R_1,\ldots,R_w$, $R_{Q_1},\ldots,R_{Q_t}$. Each receiver $R_{Q_j}$ observes a single channel $j\in \brackets*{t}$, while each $R_i$ observes all the channels $j\in \brackets*{t}$ such that $i\in Q_j$. The reduction chooses sender's utility to be a separable supermajority function for the partition $\braces*{\braces*{1},\ldots,\braces*{w},\braces*{Q_1,\ldots,Q_t}}$ that yields extremely high reward when at least $b$ receivers among $R_{Q_1},\ldots,R_{Q_t}$ know for sure that $\omega=1$ but a slight penalty for each receiver among $R_1,\ldots,R_w$ knowing that $\omega=1$.  Such a tradeoff yields that every optimal signaling scheme must fully reveal that $\omega=1$ to exactly $b$ receivers among $R_{Q_1},\ldots,R_{Q_t}$ such that the corresponding $Q_i$s have the smallest possible union.
\end{proof}
For the full proof, please see Appendix~\ref{app:hard}.
\section{Conclusions and Future Work}
\label{sec:conclusions}
In this paper, we characterize a partial ordering over the possible communication structures in Bayesian persuasion according to the sender's preferences that hold for every persuasion instance. We further prove that multi-channel Bayesian persuasion is, generally, computationally intractable even when both public and private persuasion are tractable. The proof proceeds by reduction from the minimum $b$-union problem, which is hard to approximate (as shown for the equivalent maximum $b$-intersection problem in~\cite{xavier2012note}) and -- under an extension of the \emph{dense versus random conjecture for the densest $k$-subgraph problem}~\cite{chlamtac2012everywhere} to hypergraphs -- also hard to approximate to an $O\parentheses*{k^{1/4-\epsilon}}$ multiplicative factor for any $\epsilon>0$~\cite{chlamtavc2017minimizing}. Our reduction does not preserve the approximation ratio, but we conjecture that a similar methodology to ours would allow to prove a hardness of approximation result for multi-channel persuasion. Another interesting future research direction is studying the best signaling schemes of a sender who can affect a given communication structure by adding or removing a limited number of channels.
\section*{Acknowledgements} 
Yakov Babichenko is supported by the Binational Science Foundation BSF grant no.~2018397 and by the German-Israeli Foundation for Scientific Research and Development GIF grant no.~I-2526-407.6/2019. Inbal Talgam-Cohen is a Taub Fellow supported by the Taub Family Foundation and by the Israel Science Foundation ISF grant no.~336/18. 
Haifeng Xu is supported by a Google Faculty Research Award. 
Yakov, Inbal are Haifeng are supported by an NSF-BSF grant (NSF grant no.~CCF-2132506 and BSF grant no.~2021680). Konstantin Zabarnyi is supported by a PBC scholarship for Ph.D. students in data science. The authors thank Fedor Sandomirskiy for his helpful remarks on the connection of our characterization result proof to secret sharing protocols, as well as on improving the presentation of this result. The authors are also grateful to anonymous reviewers for their suggestions on improving the paper.
\bibliographystyle{plainnat}
\bibliography{biblio}
\appendix
\section{Proof That $\aleph$ Signals Suffice to Implement any Signaling Scheme}
\label{app:aleph}
In this appendix, we formulate and prove the following lemma, which is used to prove the characterization Theorem~\ref{thm:order}, but may also be of independent interest.
\begin{lemma}
\label{lem:aleph}
Fix a distribution over states and posteriors $P\in\Delta\parentheses*{\Omega\times\parentheses*{\Delta}^k}$. Then there exists a signaling scheme $\pi:\Omega \to \Delta \parentheses*{\times_{i\in \brackets*{n}} \brackhalf*{0,1}}$ inducing $P$. In particular, $\pi$ uses at most $\aleph$ different signals on each communication channel.
\end{lemma}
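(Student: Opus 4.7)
The plan is to reduce, via Borel-isomorphism techniques, from an arbitrary signaling scheme inducing $P$ to one whose signal spaces are $\brackhalf*{0,1}$ on each channel. The existence of some such $\pi^*\colon\Omega\to\Delta\parentheses*{\prod_{j\in \brackets*{n}}S_j^*}$, with possibly abstract and very large signal spaces $S_j^*$, is implicit in $P$ being a ``distribution over states and posteriors'' (otherwise $P$ would not correspond to an actual posterior distribution in the sense of the paper's model); the role of the lemma is to justify the modeling restriction $\absolute*{S_j}\leq\aleph$ stated in the preliminaries.

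First, I would argue that, without loss of generality, each $S_j^*$ may be taken to be a standard Borel space of cardinality at most continuum. The key observation is that the content of a signal on channel $j$ affects $P$ only through the conditional distributions it induces on the state and on the remaining signals, and ultimately on the receivers' posteriors, which take values in the Polish space $\Delta\parentheses*{\Omega}\subseteq\Delta\parentheses*{\brackets*{0,1}}$ of cardinality at most continuum. Using regular conditional probabilities (valid on Polish spaces), one identifies signal values that induce identical conditional distributions on all remaining variables; the resulting quotient space $\widetilde{S}_j^*$ inherits a standard Borel structure of cardinality at most continuum, and the joint law of the coarsened signals together with $\omega$ induces the same $P$.

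Next, I would invoke the Borel isomorphism theorem: every standard Borel space of cardinality at most continuum is Borel-isomorphic to a Borel subset of $\brackhalf*{0,1}$. Letting $\psi_j\colon\widetilde{S}_j^*\to T_j\subseteq \brackhalf*{0,1}$ be such measurable bijections for each $j\in\brackets*{n}$, I would define the new scheme $\pi\parentheses*{\omega}$ to be the pushforward law of $\parentheses*{\psi_1\parentheses*{\sigma_1^*},\ldots,\psi_n\parentheses*{\sigma_n^*}}$ when $\parentheses*{\sigma_1^*,\ldots,\sigma_n^*}\sim\pi^*\parentheses*{\omega}$ (after the coarsening). Since Borel isomorphisms preserve conditional distributions, each receiver $R_i$'s posterior computed from $\parentheses*{\sigma_j}_{j:M_{i,j}=1}$ under $\pi$ equals his posterior under $\pi^*$, so $P_\pi=P$ as required, and each $\sigma_j$ takes values in a subset of $\brackhalf*{0,1}$, hence on at most $\aleph$ distinct signals.

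The main obstacle I anticipate is the first step: making rigorous the claim that one can jointly (across all channels) coarsen each $S_j^*$ to a standard Borel space of cardinality at most continuum while preserving the full joint law of $\parentheses*{\omega,\sigma_1^*,\ldots,\sigma_n^*}$---and not merely the per-channel marginals. This requires a careful inductive use of regular conditional distributions on Polish spaces together with a measurable selection argument (e.g., Kuratowski--Ryll-Nardzewski) to exhibit measurable representatives of the equivalence classes on each $S_j^*$ simultaneously. Once this setup is in place, the Borel isomorphism theorem and the verification that each receiver's posterior is preserved under the relabeling are routine.
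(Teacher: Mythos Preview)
Your relabeling-via-Borel-isomorphism route is genuinely different from the paper's. The paper does not touch the signals of the given scheme $\pi'$ at all; it builds a fresh scheme from scratch using a one-time-pad construction. Since $\Delta(\Omega)$ has cardinality at most $\aleph$, each marginal posterior can be tagged by some $q_{\sigma}^i\in\brackhalf*{0,1}$. For receiver $R_i$ the sender picks a channel $C_{j_i}$ observed by $R_i$, draws independent uniform keys $e_1,\ldots,e_{l(i)}\sim U\brackhalf*{0,1}$ (one per other receiver on $C_{j_i}$), sends $q_{\sigma}^i+\sum_m e_m\pmod 1$ on $C_{j_i}$, and sends each $e_m$ on a channel seen by $R_i$ but not by the corresponding other receiver (or an arbitrary channel of $R_i$ if none exists). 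Receiver $R_i$ decrypts $q_{\sigma}^i$; each non-dominating receiver sees only uniform noise from that block. This completely sidesteps the coarsening obstacle you flag---the original $S_j^*$ never enter---and, more to the point, it is exactly the secret-sharing device that drives the proof of Theorem~\ref{thm:order}, so the lemma doubles as a warm-up for the main result.

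On your route, the per-channel quotienting deserves a warning: collapsing $\sigma_j$ by the conditional law it alone induces on $(\omega,\text{posteriors})$ can destroy inter-channel information needed by a receiver who combines several channels (take $\sigma_1=Z$, $\sigma_2=\omega+Z\pmod 2$ with $Z$ independent of $\omega$: each channel alone is uninformative, so that quotient is constant, yet jointly they reveal $\omega$). The RCD/Kuratowski--Ryll-Nardzewski machinery you propose does not obviously repair this. A cleaner fix is purely $\sigma$-algebraic: each $p_{\sigma}^i$ is Polish-valued, so $\sigma(p_{\sigma}^i)$ is countably generated, and every event in $\bigvee_{j:M_{i,j}=1}\sigma(\sigma_j)$ already lies in the $\sigma$-algebra of some countable subfamily of the generators; collecting those yields countably generated $\mathcal{G}_j\subseteq\sigma(\sigma_j)$, each generated by a single real-valued $\tilde\sigma_j=f_j(\sigma_j)$, after which your Borel-isomorphism step goes through.
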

\begin{proof}
Fix a signaling scheme $\pi'$ inducing $P$ under $M$. As $\Omega=\brackets*{0,1}$, the CDF of a distribution over $\Omega$ is continuous up to a countable set of points $D$; thus, it is specified by its values on $\mathbb{Q}\cap D$. Therefore, there are only continuum-many possible posteriors. Let us label each possible marginal posterior $p_{\sigma}^i$ by another $q_{\sigma}^i\in \brackhalf*{0,1}^k$ -- it is doable, as $\absolute*{\brackhalf*{0,1}}=\aleph$. For a joint posterior $p_{\sigma}$, denote $q_{\sigma}:=\parentheses*{q_{\sigma}^1,\ldots,q_{\sigma}^k}$.

Suppose that given some value of $\omega$, a certain posterior labeled by $q_{\sigma}$ can be induced by $\pi'$ under $M$. For every $1\leq i\leq k$, take a communication channel $C_{j_i}$ observed by $R_i$ and (possibly) some other receivers $R_{i_1},\ldots,R_{i_{l\parentheses*{i}}}$. The sender should draw $l\parentheses*{i}$ i.i.d.~random variables $e_1,\ldots,e_{l\parentheses*{i}}\sim U\brackhalf*{0,1}$ -- \emph{encryption keys}. Then she should send $q_{\sigma}^i+e_1+\ldots+e_{l\parentheses*{i}}\; \parentheses*{\text{mod } 1}$ via $C_{j_i}$, and send each $e_m$ $\parentheses*{m\in\brackets*{l\parentheses*{i}}}$ via a channel observed by $R_i$, but not by $R_{i_m}$, if such a channel exists; otherwise, send $e_m$ via an arbitrary channel observed by $R_i$. For $j\in\brackets*{n}$, consider $\sigma_j$ to be the concatenation of the signals sent via $C_j$. As $\sigma_j$ is a concatenation of at most $k^2$ values from $\brackhalf*{0,1}$, we can assume w.l.o.g.~that $\sigma_j\in \brackhalf*{0,1}$ for every $j\in\brackets*{n}$ by defining an injective map from $\brackhalf*{0,1}^{k^2}$ to $\brackhalf*{0,1}$.

Fix $1\leq i\leq k$. Note that $R_i$ observes all the elements of the set:
$$
E_i:=\braces*{q_{\sigma}^i+e_1+\ldots+e_{l\parentheses*{i}}\; \parentheses*{\text{mod } 1}, e_1,\ldots, e_{l\parentheses*{i}}}.
$$
Thus, she can decrypt $q_{\sigma}^i$. Moreover, by construction, $R_i$ does not observe some element from each of the sets $E_{i'}$ for every $1\leq i'\leq n$ s.t.~$R_{i'}$ is not information-dominated by $R_i$. Therefore, $R_i$ sees the elements of $E_i$, the elements of some $E_{i_1'},\ldots,E_{i_s'}$ -- where $R_{i_1'},\ldots,R_{i_s'}$ is the list of the receivers information-dominated by $R_i$ -- and some $r^1,\ldots,r^{l'\parentheses*{i}}$, which are uniform in $\brackhalf*{0,1}$ and are independent of each other and of the algebra generated by $E_i$. Therefore:
\begin{align*}
&p_{\sigma}^i\parentheses*{\sigma_j:M_{i,j}=1}=p_{\sigma}^i\parentheses{q_{\sigma}^i+e_1+\ldots+e_{l\parentheses*{i}}\; \parentheses*{\text{mod } 1}, e_1,\ldots, e_{l\parentheses*{i}},q_{\sigma}^{i_1'}+e_{i_1'}+\ldots+e_{l\parentheses*{i_1'}}\; \parentheses*{\text{mod } 1},\\
&e_{i_1'},\ldots, e_{l\parentheses*{i_1'}},\ldots,q_{\sigma}^{i_s'}+e_{i_s'}+\ldots+e_{l\parentheses*{i_s'}}\; \parentheses*{\text{mod } 1}, e_{i_s'},\ldots, e_{l\parentheses*{i_s'}},r^1,\ldots,r^{l'\parentheses*{i}}}=\\
&p_{\sigma}^i\parentheses*{q_{\sigma}^i+e_1+\ldots+e_{l\parentheses*{i}}\; \parentheses*{\text{mod } 1}, e_1,\ldots, e_{l\parentheses*{i}}}=p_{\sigma}^i\parentheses*{q_{\sigma}^i}=q_{\sigma}^i,
\end{align*}
as desired.
\end{proof}

\section{Proof of the Characterization Theorem}
\label{app:characterization}
Similarly to Lemma~\ref{lem:aleph} from Appendix~\ref{app:aleph} proof, let us label each possible marginal posterior $p_{\sigma}^i$ by some $q_{\sigma}^i\in \brackhalf*{0,1}$ ($1\leq i\leq k$) and assume w.l.o.g.~that $S_1,\ldots,S_k\supseteq\brackhalf*{0,1}$. We shall first prove a slight generalization of Corollary~\ref{cor:optimal}; the proof is almost identical to that of Lemma~\ref{lem:aleph}.
\begin{lemma}
\label{lem:tech}
For a communication structure $M$, let $I$ be a subset of receivers no one of which is information-dominated by another receiver among $R_1,\ldots,R_k$. Fix $P\in\Delta\parentheses*{\Omega\times\parentheses*{\Delta\parentheses*{\Omega}}^k}$ induced by some signaling scheme under private persuasion. Let $P_I\in\Delta\parentheses*{\Omega\times\parentheses*{\Delta\parentheses*{\Omega}}^k}$ be a distribution over posteriors that coincides with $P$ on $I$ and coincides with the distribution induced by the no-revelation scheme on $\brackets*{k}\setminus I$. Then $P_I\in\mathcal{P}_{M}$. Furthermore, $P_I$ is induced under $M$ by some signaling scheme with all the signal realizations belonging to $\brackhalf*{0,1}$.
\end{lemma}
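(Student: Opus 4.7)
The plan is to adapt the secret-sharing construction from the proof of Lemma~\ref{lem:aleph}, leveraging the hypothesis that every $R_i$ with $i\in I$ has, relative to every other receiver $R_{i'}$, at least one channel that $R_i$ observes but $R_{i'}$ does not. Starting from a private signaling scheme $\pi'$ that induces $P$, I would label each possible marginal posterior $p_\sigma^i$ for $i\in I$ by a value $q_\sigma^i\in\brackhalf*{0,1}$, and then build a scheme on $M$ that encrypts each $q_\sigma^i$ so that $R_i$ decodes it while no other receiver (whether in $I$ or not) obtains any information about $\omega$.

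For each $i\in I$, I would pick a ``main'' channel $C_{j_i}$ observed by $R_i$ and enumerate by $R_{i_1},\ldots,R_{i_{l(i)}}$ the other receivers in $\brackets*{k}\setminus\braces*{i}$ that also observe $C_{j_i}$. By hypothesis, for each such $m$ there exists a channel $C_m^{(i)}$ observed by $R_i$ but not by $R_{i_m}$. I would then draw i.i.d.\ keys $e_1^{(i)},\ldots,e_{l(i)}^{(i)}\sim U\brackhalf*{0,1}$, with the key families mutually independent across $i\in I$, and transmit $q_\sigma^i+\sum_m e_m^{(i)}\pmod 1$ on $C_{j_i}$ and each $e_m^{(i)}$ on $C_m^{(i)}$; channels that receive no information are filled with independent uniform noise. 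Each physical channel then carries at most $O(k^2)$ real numbers, which I would compress into a single element of $\brackhalf*{0,1}$ via an injective map $\brackhalf*{0,1}^{k^2}\hookrightarrow\brackhalf*{0,1}$, exactly as in Lemma~\ref{lem:aleph}.

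Correctness would split into two cases. For $i\in I$: $R_i$ sees all of his own keys and his encrypted message, and hence recovers $q_\sigma^i$; for every other $i'\in I$ whose encrypted message $R_i$ observes, $R_i$ appears as some $R_{i_m}$ in the construction for $i'$ and therefore misses the corresponding key $e_m^{(i')}$, so the one-time-pad property makes the associated signals uniform on $\brackhalf*{0,1}$ and independent of $q_\sigma^{i'}$. Combined with the mutual independence of the key families across different $i'$, this shows that the collection of ``foreign'' signals observed by $R_i$ is jointly independent of $\omega$, so $R_i$'s posterior is exactly $q_\sigma^i$, and the joint distribution of $\omega$ together with the posteriors of the receivers in $I$ matches the marginal of $P$ on $I$. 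For $i'\notin I$, the same one-time-pad argument applies to every $i\in I$ --- either $R_{i'}$ does not observe $C_{j_i}$, or he is enumerated among the $R_{i_m}$'s and therefore misses a key --- so every signal $R_{i'}$ observes is jointly independent of $\omega$, and his posterior equals the prior, matching the no-revelation distribution on $\brackets*{k}\setminus I$.

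The main obstacle is the joint independence step: one must rule out that a receiver combines signals pertaining to different $i\in I$ and, through that combination, infers something about $\omega$. The resolution is the mutual independence across $i\in I$ of the key families, which ensures that the ``missed'' keys for different $i'$ are jointly uniform and independent of $\parentheses*{\omega,\braces*{q_\sigma^{i'}:i'\in I}}$; the encrypted signals then factorize, conditionally on what the observer sees, into pieces whose distributions do not depend on $\omega$. Once this factorization is in hand, the scheme's output under $M$ coincides with $P_I$, and the requirement that signal realizations belong to $\brackhalf*{0,1}$ follows from the final compression step.
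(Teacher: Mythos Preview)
Your proposal is correct and follows essentially the same secret-sharing construction as the paper's proof: for each $i\in I$ pick a main channel, mask $q_\sigma^i$ with a sum of i.i.d.\ uniform keys, and route each key through a channel visible to $R_i$ but hidden from one of the co-observers, then compress via an injection into $\brackhalf*{0,1}$. If anything, your write-up is slightly more explicit than the paper's about the joint-independence step across different $i\in I$, which the paper handles with the one-line remark that the foreign signals are ``uniform in $\brackhalf*{0,1}$ and are independent of each other and of the algebra generated by $E_i$.''
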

\begin{proof}[Proof of Lemma~\ref{lem:tech}]
Fix a signaling scheme $\pi$ inducing $P$ under private persuasion. We shall prove that the restriction of $P$ to $I$ can be induced by a signaling scheme also under $M$, revealing no information to the members of $\brackets*{k}\setminus I$, using signals whose realizations are supported on a subset of $\brackhalf*{0,1}$.

Suppose that given a certain value of $\omega$, a joint posterior labeled by $q_{\sigma}:=\parentheses*{q_{\sigma}^1,\ldots,q_{\sigma}^k}$ can be induced under private persuasion, and the sender wishes to induce $\parentheses*{q_{\sigma}^i:i\in I}$ under $M$. For every $i\in I$, take a communication channel $C_{j_i}$ observed by $R_i$ and (possibly) some other receivers $R_{i_1},\ldots,R_{i_{l\parentheses*{i}}}$. As in Lemma~\ref{lem:aleph} proof, the sender should draw $l\parentheses*{i}$ i.i.d.~random variables $e_1,\ldots,e_{l\parentheses*{i}}\sim U\brackhalf*{0,1}$ that will serve as \emph{encryption keys}, transmit $q_{\sigma}^i+e_1+\ldots+e_{l\parentheses*{i}}\; \parentheses*{\text{mod } 1}$ via $C_{j_i}$, and send each $e_m$ $\parentheses*{m\in\brackets*{l\parentheses*{i}}}$ via a channel observed by $R_i$, but not by $R_{i_m}$ (such a channel exists, as the members of $I$ are not information-dominated by any receivers). Again, as in Lemma~\ref{lem:aleph} proof, upon considering $\sigma_j$ ($j\in\brackets*{n}$) to be the concatenation of the signals on $C_j$, one can assume w.l.o.g.~that $\sigma_j\in \brackhalf*{0,1}$ for every $j\in\brackets*{n}$.

Fix $i\in I$. Note that $R_i$ observes all the elements of the set:
$$
E_i:=\braces*{q_{\sigma}^i+e_1+\ldots+e_{l\parentheses*{i}}\; \parentheses*{\text{mod } 1}, e_1,\ldots, e_{l\parentheses*{i}}}.
$$
Thus, she can decrypt $q_{\sigma}^i$. Moreover, by construction, $R_i$ does not observe some element from each of the sets $E_{i'}$ for $i'\in\ I\setminus\braces*{i}$. That is, $R_i$ sees the elements of $E_i$ and some $s^1,\ldots,s^{l'\parentheses*{i}}$, which are uniform in $\brackhalf*{0,1}$ and are independent of each other and of the algebra generated by $E_i$. Similarly to Lemma~\ref{lem:aleph} proof, we deduce that:
\begin{align*}
    &p_{\sigma}^i\parentheses*{\sigma_j:M_{i,j}=1}=p_{\sigma}^i\parentheses*{q_{\sigma}^i+e_1+\ldots+e_{l\parentheses*{i}}\; \parentheses*{\text{mod } 1}, e_1,\ldots, e_{l\parentheses*{i}},s^1,\ldots,s^{l'\parentheses*{i}}}=q_{\sigma}^i,
\end{align*}
as desired. Similarly, by construction, for every $i\in \brackets*{k}\setminus I$, $R_i$ only observes several i.i.d.~numbers in $\brackhalf*{0,1}$, which induce the same $q_{\sigma}^i$ as the no-revelation signaling scheme, as needed.
\end{proof}
The proof of Theorem~\ref{thm:order} relies on an inductive generalization of the above ideas. The induction base is Corollary~\ref{cor:optimal}, which immediately follows from Lemma~\ref{lem:tech} with $I=\brackets*{k}$. The induction step also uses Lemma~\ref{lem:tech}, together with the following.
\begin{lemma}
\label{lem:tech2}
Fix a communication structure $M$ and $i\in\brackets*{k}$. Let $D_i$ be the set of all the receivers information-dominated by $R_i$. Fix a distribution over states and posteriors $P\in\mathcal{P}_{M}$. Let $P'\in\Delta\parentheses*{\Omega\times\parentheses*{\Delta\parentheses*{\Omega}}^k}$ be a distribution that coincides with $P$, for any value of $\omega$, on $\brackets*{k}\setminus\braces*{i}$, and implies the distribution over $p_{\sigma}^i$ induced when $R_i$ knows the marginal posteriors of the members of $D_i$ as specified by $P$.\footnote{In particular, under $P'$, $R_i$ is not told his own marginal posterior as specified by $P$.} Then there exists a signaling scheme, all signal realizations of which belong to $\brackhalf*{0,1}$, that induces $P'$ under $M$.
\end{lemma}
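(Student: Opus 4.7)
The plan is to adapt the one-time-pad construction from Lemma~\ref{lem:tech} so that each receiver $R_j$ with $j\neq i$ still learns exactly $p_\sigma^j$, while $R_i$'s effective view is degraded to $\braces*{p_\sigma^j}_{j\in D_i}$. I would first fix a scheme $\pi$ inducing $P$ under $M$ (which exists since $P\in\mathcal{P}_M$) and use it in the background to sample $\parentheses*{\omega,p_\sigma^1,\ldots,p_\sigma^k}\sim P$; then, rather than forwarding the raw signals of $\pi$, the sender re-broadcasts each $p_\sigma^j$ for $j\neq i$ using tailored encryption, and crucially never broadcasts $p_\sigma^i$.

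For each $j\in\brackets*{k}\setminus\braces*{i}$, pick a ``home'' channel $c_j$ observed by $R_j$. For every receiver $R_{j'}$ that does \emph{not} information-dominate $R_j$ (in particular for $R_i$ itself when $j\notin D_i$), by definition of non-domination there is some channel $c_{j,j'}$ with $M_{j,c_{j,j'}}=1$ and $M_{j',c_{j,j'}}=0$. The sender draws independent uniform pads $e_{j,j'}\sim U\brackhalf*{0,1}$, transmits $p_\sigma^j+\sum_{j'}e_{j,j'}\pmod 1$ on $c_j$, and transmits each $e_{j,j'}$ on $c_{j,j'}$. Concatenating multiple transmissions on a given channel into a single $\brackhalf*{0,1}$-valued signal, via the injection used in Lemma~\ref{lem:aleph}, keeps every realized signal in $\brackhalf*{0,1}$.

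Verifying the induced marginals is the heart of the proof. For $R_i$, the transmissions for $j\in D_i$ are fully decryptable, since $R_i$ dominates each such $R_j$ and therefore observes both $c_j$ and every pad channel $c_{j,j'}$; for $j\notin D_i\cup\braces*{i}$, the pad $e_{j,i}$ is invisible to $R_i$ and makes the ciphertext on $c_j$ look uniform and independent of $\omega$, so $R_i$'s posterior becomes $\Pr_P\brackets*{\omega\mid\braces*{p_\sigma^j}_{j\in D_i}}$ -- the intended $\tilde p_\sigma^i$. For $j\neq i$, the receiver $R_j$ recovers his own $p_\sigma^j$ and also each $p_\sigma^{j''}$ with $j''\in D_j$, while all other ciphertexts look like pure noise.

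The main obstacle, and the step I would verify most carefully, is showing that this extra visibility of $\braces*{p_\sigma^{j''}}_{j''\in D_j}$ does not refine $R_j$'s posterior past the target $p_\sigma^j$. My plan is a sufficiency argument: under $\pi$, each $p_\sigma^{j''}$ is a deterministic function of $R_j$'s view (because $R_j$ dominates $R_{j''}$), and by definition of posterior $p_\sigma^j$ is a sufficient statistic for $\omega$ given that view, so the law of iterated expectations yields $\Pr_P\brackets*{\omega\mid p_\sigma^j,\braces*{p_\sigma^{j''}}_{j''\in D_j}}=p_\sigma^j$. Combined with the independence of all pads, this gives $R_j$'s posterior under the new scheme as exactly $p_\sigma^j$. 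Assembling the pieces, the joint distribution of states and posteriors induced by the constructed scheme matches $P'$, as desired.
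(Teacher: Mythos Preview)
Your argument is correct, and it takes a genuinely different route from the paper's proof.

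The paper keeps the raw signals of the original scheme $\pi$ and only garbles the channels that $R_i$ sees but no member of $D_i$ sees. On each such channel $C_\ell$ it sends a \emph{separate} one-time-padded copy $\sigma_\ell + e_m$ for every other observer $R_{j_m}$, placing the key $e_m$ on a channel visible to $R_{j_m}$ but not to $R_i$. Thus every $R_j$ with $j\neq i$ can reconstruct \emph{exactly} the raw signals it had under $\pi$, and its posterior is trivially $p_\sigma^j$ with no further argument. Meanwhile $R_i$ is left with the raw signals on the union of $D_i$-observed channels (plus noise).

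Your construction instead discards the raw signals entirely and re-broadcasts each posterior $p_\sigma^j$ ($j\neq i$) under a single encryption that only $R_j$ and its dominators can undo. This forces you to supply the sufficiency step you flag as the ``main obstacle'': each $R_j$ now sees $\parentheses*{p_\sigma^j,\braces*{p_\sigma^{j''}}_{j''\in D_j}}$ rather than its original signals, and you must argue this does not refine the posterior. Your tower-property argument is sound: $\parentheses*{p_\sigma^j,\braces*{p_\sigma^{j''}}_{j''\in D_j}}$ is measurable with respect to $R_j$'s view under $\pi$, and conditioning on it recovers $p_\sigma^j$ since $p_\sigma^j$ is already one of the coordinates.

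What each approach buys: the paper's per-observer encryption avoids the sufficiency lemma at the cost of slightly more bookkeeping about which channels get modified. Your posterior-based construction is cleaner to state and, notably, gives $R_i$ \emph{exactly} the tuple $\braces*{p_\sigma^{j'}}_{j'\in D_i}$---matching the lemma's literal wording---whereas the paper's construction hands $R_i$ the raw signals on $D_i$-observed channels, which can be strictly more informative than the posteriors alone (though still adequate for the use in Theorem~\ref{thm:order}). One small notational point: when you write $p_\sigma^j+\sum_{j'} e_{j,j'}\pmod 1$ you are implicitly using the labeling $p_\sigma^j\mapsto q_\sigma^j\in\brackhalf*{0,1}$ set up at the start of the appendix; it would be cleaner to write $q_\sigma^j$ there.
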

This lemma allows to extend a given signaling scheme for all the receivers but $R_i$ so that it would include also $R_i$, in a way that reveals to $R_i$ only the information he \emph{must} obtain under \emph{any} extension of the given signaling scheme -- namely, the marginal posteriors of the receivers information-dominated by $R_i$.
\begin{proof}[Proof of Lemma~\ref{lem:tech2}]
The proof is similar to those of Lemma~\ref{lem:aleph} and Lemma~\ref{lem:tech}. Assume that $S_1=\ldots=S_k=\brackhalf*{0,1}$ (it does not make our problem easier, as by Lemma~\ref{lem:aleph} one can assume w.l.o.g.~that the signaling scheme inducing $P$ is supported on at most $\aleph$ different signal realizations). Suppose that under $M$, for some given value of $\omega$, some posterior labeled by $q_{\sigma}$ can be induced, and the sender wishes to induce $\parentheses*{q_{\sigma}^m:m\in \brackets*{k}\setminus\braces*{i}}$ on $\brackets*{k}\setminus\braces*{i}$ and the marginal posterior $p_{\sigma}^i$ obtained when $R_i$ learns $\braces*{p_{\sigma}^m:m\in D_i}$. 

For every $j\in\brackets*{n}$ such that $C_j$ is observed by $R_i$ but by none of $D_i$'s members, let $R_{j_1},\ldots,R_{j_{l\parentheses*{j}}}$ be the receivers observing $C_j$ besides $R_i$.\footnote{As for the other communication channels -- when implementing $P'$, the sender should just send over them the same signal realizations as when implementing $P$, concatenated with the auxiliary signal realizations that we shall introduce below.} The sender should draw $l\parentheses*{j}$ i.i.d.~random variables $e_1,\ldots,e_{l\parentheses*{j}}\sim U\brackhalf*{0,1}$; then she should send $\sigma_j+e_1\;\parentheses*{\text{mod } 1},\ldots,\sigma_j+e_{l\parentheses*{j}}\; \parentheses*{\text{mod } 1}$ via $C_{j}$, and send each $e_m$ $\parentheses*{m\in\brackets*{l\parentheses*{j}}}$ via a channel observed by $R_{j_m}$, but not by $R_{i}$ (such a channel exists, since $\parentheses*{D_i\cup \braces*{R_i}}\cap \braces*{R_{j_1},\ldots,R_{j_{l\parentheses*{j}}}}=\emptyset$). That is -- unlike in  Lemma~\ref{lem:tech} proof, the signal is encrypted with each key \emph{separately} (which allows each of $R_{j_1},\ldots,R_{j_{l\parentheses*{j}}}$ to decrypt it, but not to $R_i$). The rest of the proof is as for Lemma~\ref{lem:tech}, \emph{mutatis mutandis}.
\end{proof}
Now we are ready to prove the characterization result.
\begin{proof}[Proof of Theorem~\ref{thm:order}]
We start with the harder direction and prove by induction on the number of receivers, $k$, that if $S_{M_1}\supseteq S_{M_2}$ then $\mathcal{P}_{M_1}\subseteq\mathcal{P}_{M_2}$, and each distribution $P\in\mathcal{P}_{M_1}$ is implementable under $M_2$ using only signal realizations from $\brackhalf*{0,1}$; it trivially implies $M_1\preceq M_2$.

The claim holds for $k=1$, as by Lemma~\ref{lem:tech}, one can implement persuasion of a single receiver using only signals from $\brackhalf*{0,1}$. Assume that the claim holds for $k\leq m$, for some $m\geq 1$, and consider $k=m+1$. Assume w.l.o.g.~that no two receivers $R_{i_1},R_{i_2}$ ($i_1\neq i_2$) satisfy $\parentheses*{i_1,i_2}, \parentheses*{i_2,i_1}\in S_{M_2}$. This is possible, since one can replace a set of receivers observing exactly the same communication channels by a single receiver. Under this assumption, there exists a receiver $R_i$ who is not information-dominated by any receiver under $M_2$.

Let $P\in\Delta\parentheses*{\Omega\times\parentheses*{\Delta\parentheses*{\Omega}}^k}$ be implementable by a signaling scheme $\pi$ under $M_1$. By the induction hypothesis -- one can implement $P|_{\braces*{R_1,\ldots,R_k}\setminus \braces*{R_i}}$ under $M_2$ using only signal realizations from $\brackhalf*{0,1}$. Let us implement it under $M_2$ by the method ensured from Lemma~\ref{lem:tech2}. Under this implementation, the information transmitted to $R_i$ exactly consists of the marginal posteriors of the receivers information-dominated by $R_i$ under $M_2$. As $S_{M_1}\supseteq S_{M_2}$, our implementation of $P|_{\braces*{R_1,\ldots,R_k}\setminus \braces*{R_i}}$ under $M_2$ never reveals more information to $R_i$ than any of its implementation under $M_1$.

As $R_i$ is not information-dominated by any receiver under $M_2$, by Lemma~\ref{lem:tech} -- the sender can transmit a signal to $R_i$ with all the other receivers learning nothing from it. Hence, the sender can complete implementing $P$ under $M_2$ by sending a signal to $R_i$ consisting of the (encrypted) concatenation of all the signals they should get under $\pi$.

For the second direction, assume that $S_{M_1}\nsupseteq S_{M_2}$ and let us prove that $M_1\npreceq M_2$. Take $\parentheses*{i_1,i_2}\in S_{M_2}\setminus S_{M_1}$. Set $k=2$, a uniform prior $p$ over $\braces*{0,1}\subseteq\Omega$ and $A_1=\ldots=A_k=\braces*{0,1}$. Set for $i\in\brackets*{k}$: $u_i\parentheses*{\omega,a_i=0}=0$, for $i\neq i_2$: $u_i\parentheses*{\omega=1,a_i=1}=1$ and $u_i\parentheses*{\omega=0,a_i=1}=-1$, for $i=i_2$: $u_{i_2}\parentheses*{\omega=1,a_{i_2}=1}=-1$ and $u_{i_2}\parentheses*{\omega=0,a_{i_2}=1}=0$. Assume further that $v\parentheses*{\omega,a_1,\ldots,a_k}=1$ if $a_{i_1}=a_{i_2}=1$ and $v\parentheses*{\omega,a_1,\ldots,a_k}=0$ otherwise. In particular, in our construction, the receivers do not have externalities. Suppose that ties are broken in the sender's favour -- that is, if it is incentive-compatible for $R_{i_1},R_{i_2}$ to take action $1$ -- they do it.

The sender gets a strictly positive expected utility if, with a positive probability, ${p_{\sigma}}^{i_2}\parentheses*{\omega=0}=1$, while ${p_{\sigma}}^{i_1}\parentheses*{\omega=0}\leq \frac{1}{2}$ (otherwise, the sender gets expected utility of $0$). This cannot happen under $M_2$, as $R_{i_1}$ is never less informed than $R_{i_2}$. However, this can happen with probability $\frac{1}{2}$ under $M_1$ -- just always send a fully-informative signal via a communication channel observed by $R_{i_2}$ but not by $R_{i_1}$, and a totally non-informative signal via all the other communication channels. Therefore, in the instance we constructed -- the sender gets a strictly higher expected utility for $M_1$ compared to $M_2$; hence, $M_1\npreceq M_2$, as required.
\end{proof}
\section{Receivers with Externalities}
\label{app:univarsal}
In this section, we show that all the results of Section~\ref{sec:characterization}, including its main Theorem~\ref{thm:order}, continue to hold when receivers' actions may affect each other -- that is, the receivers have \emph{externalities}. Moreover, Lemma~\ref{lem:aleph} from Appendix~\ref{app:aleph} continues to hold.

\vspace{2mm}
\noindent{\bf Universal type space.} After observing the sender's signals $\sigma$, the receivers play a Bayesian game with incomplete information. Each receiver $R_i$ has an infinite sequence of beliefs $b_1^i, b_2^i,\ldots$ that may affect his choice of action in the game. The first-order belief $b_1^i\in\Delta\parentheses*{\Omega}$ is $R_i$'s updated belief about the state of nature, and for $l\geq 2$, 
$$
b_l^i\in \braces*{b_{l-1}^i}\times \Delta\parentheses*{\Omega\times\parentheses*{\times_{i'\in\brackets*{k}\setminus\braces*{ i}}\braces*{\text{all possible $b_{l-1}^{i'}$s}}}}.
$$
In words, the $l$-th order belief of $R_i$ is the concatenation of his $\parentheses*{l-1}$-th order belief and his belief regarding the $\parentheses*{l-1}$-th order beliefs of all other receivers.  To represent these infinite levels of beliefs
we use Harsanyi's \emph{universal type space}~\cite{harsanyi1967games} as formulated by~\cite{mertens1985formulation,brandenburger1993hierarchies}. This formulation constructs $k$ universal type spaces $T_1,\ldots,T_k$ for the $k$ receivers, replacing infinite belief hierarchies with well-defined infinite-dimensional type spaces.\footnote{In particular, for every receiver $R_i$ there exists a homeomorphism from $T_i$ to $\Delta\parentheses*{\Omega\times \parentheses*{\times_{i'\in\brackets*{k}\setminus\braces*{i}}T_{i'}}}$. Therefore, $T_1,\ldots,T_k$ capture the whole belief hierarchy $\braces*{b_1^i, b_2^i,\ldots}_{i\in\brackets*{k}}$ of the receivers.} Define $T:=T_1\times\ldots\times T_k$, and
let $t_i$ be a random variable representing the actual \emph{type} of receiver $R_i$. Note that $t_i=t_i\parentheses*{\sigma_j:M_{i,j}=1}$, i.e., receiver $R_i$'s type depends on the (random) signals over channels he observes.

As $\Omega=\brackets*{0,1}$, and the belief hierarchy captured by a universal type space is of countable depth -- each of $T_1,...,T_k$ is countable. Therefore, all the results of Section~\ref{sec:characterization}, as well as Lemma~\ref{lem:aleph}, continue to hold, with the following changes:
\begin{enumerate}
    \item The common prior $p$ is an element of $\Delta\parentheses*{\Omega\times T}$ rather than of $\Delta\parentheses*{\Omega}$ -- that is, it is a distribution over the states of nature and the receivers' types.
    \item The marginal posterior $p_{\sigma}^i$ ($1\leq i\leq k$) equals $t_i \in T_i$, the type of $R_i$.
\end{enumerate}
\section{Proof of the Positive Computational Result}
\label{app:FPTAS}
\begin{proof}[Proof of Theorem~\ref{thm:FPTAS}]
Let $P_1,\ldots,P_k:\Omega\to\brackets*{0,1}$ be the marginal distributions over posteriors of $R_1,\ldots, R_k$, respectively. By~\cite{kamenica2011bayesian}, one can assume that each of $P_1,\ldots,P_k$ is Bayes-plausible: $\mathbb{E}_{p_{\sigma}\sim P}\brackets*{p_{\sigma}}=p$; otherwise, no signaling scheme can induce them. Since $G^M$ is a directed forest, we get from~\cite{brooks2019information} that there exists a signaling scheme $\pi$ that induces the marginal distributions $P_1,\ldots,P_k$ on $R_1,\ldots,R_K$, respectively, if and only if for every $\parentheses*{i_1,i_2}\in S_M$, the distribution $P_{i_1}$ is a mean-preserving spread of $P_{i_2}$.

Fix $0<\epsilon<1$. For simplicity -- assume w.l.o.g.~that $\epsilon$ is a reciprocal of a positive integer.\footnote{For every $0<\epsilon<1$, there exists a positive integer $d$ such that $\frac{\epsilon}{2}\leq \frac{1}{d}\leq\epsilon$; then $\poly\parentheses*{k,\frac{1}{\epsilon}}=\poly\parentheses*{k,\frac{1}{1/d}}$.} Define the grid $W:=\braces*{l\cdot\epsilon}_{0\leq l\leq \frac{1}{\epsilon}}^{\absolute*{\Omega}}$. We shall discretize the problem by restricting the admissible marginal posterior distributions to $W$.

Introduce a variable $x_{l_1,\ldots,l_{\absolute*{\Omega}}}^i$ for every $0\leq l_1,\ldots,l_{\absolute*{\Omega}}\leq\frac{1}{\epsilon}$ and $i\in\brackets*{k}$ to specify which probability mass $P_i$ assigns to $\parentheses*{{l_1\epsilon,\ldots,l_{\absolute*{\Omega}}}\epsilon}\in W$. Furthermore, introduce a variable $x_{l_1,\ldots,l_{\absolute*{\Omega}},r_1,\ldots,r_{\absolute*{\Omega}}}^{i_1,i_2}$ for every $0\leq l_1,\ldots,l_{\absolute*{\Omega}},r_1,\ldots,r_{\absolute*{\Omega}}\leq\frac{1}{\epsilon}$ and $i_1,i_2\in\brackets*{k}$ such that $\parentheses*{i_1,i_2}\in S_M$ to specify which probability mass is moved from $\parentheses*{{r_1\epsilon,\ldots,r_{\absolute*{\Omega}}}\epsilon}\in W$ to $\parentheses*{{l_1\epsilon,\ldots,l_{\absolute*{\Omega}}}\epsilon}\in W$ when treating $P_{i_1}$ as a mean-preserving spread of $P_{i_2}$. Consider the following LP:
\begin{align*}
    \text{max}\;\;\;\;&\sum_{0\leq l_1,\ldots,l_{\absolute*{\Omega}} \leq \frac{1}{\epsilon}}\sum_{i\in \brackets*{k}} u^{i}\parentheses*{l_1\epsilon,\ldots,l_{\absolute*{\Omega}}\epsilon} \cdot x_{l_1,\ldots,l_{\absolute*{\Omega}}}^{i}\\
    \text{s.t.}\;\;\;\;\;\;&\sum_{0\leq l_1,\ldots, l_{b-1},l_{b+1},l_{\absolute*{\Omega}}\leq \frac{1}{\epsilon}} l_{b}\epsilon\cdot x_{l_1,\ldots,l_{\absolute*{\Omega}}}^{i}=p^b\;\;\;\;\;\;\;\;\;\;\;\;\;\;\;\; b\in\brackets*{\absolute*{\Omega}},\;\; i\in\brackets*{k}\\
    &\sum_{0\leq r_1,\ldots,r_{\absolute*{\Omega}}\leq\frac{1}{\epsilon}} y_{l_1,\ldots,l_{\absolute*{\Omega}},r_1,\ldots,r_{\absolute*{\Omega}}}^{i_1,i_2}=x_{l_1,\ldots,l_{\absolute*{\Omega}}}^{i_1}\;\;\;\;\;\;\;\;\;\;\;\;0\leq l_1,\ldots,l_{\absolute*{\Omega}}\leq\frac{1}{\epsilon},\;\;\parentheses*{i_1,i_2}\in S_M\\
    &\sum_{0\leq l_1,\ldots,l_{\absolute*{\Omega}}\leq\frac{1}{\epsilon}} y_{l_1,\ldots,l_{\absolute*{\Omega}},r_1,\ldots,r_{\absolute*{\Omega}}}^{i_1,i_2}=x_{r_1,\ldots,r_{\absolute*{\Omega}}}^{i_2}\;\;\;\;\;\;\;\;\;\;\;\;0\leq r_1,\ldots,r_{\absolute*{\Omega}}\leq\frac{1}{\epsilon},\;\;\parentheses*{i_1,i_2}\in S_M\\
    &\sum_{0\leq l_1,\ldots, l_{b-1},l_{b+1},l_{\absolute*{\Omega}}\leq\frac{1}{\epsilon}} l_{b} y_{l_1,\ldots,l_{\absolute*{\Omega}},r_1,\ldots,r_{\absolute*{\Omega}}}^{i_1,i_2}=r_b\;\;\;\;\;\;\;\;\;\;b\in\brackets*{\absolute*{\Omega}},\;\;0\leq r_1,\ldots,r_{\absolute*{\Omega}}\leq\frac{1}{\epsilon},\;\;\parentheses*{i_1,i_2}\in S_M\\
    &\sum_{0\leq l_1,\ldots,l_{\absolute*{\Omega}}\leq\frac{1}{\epsilon}} x_{l_1,\ldots,l_{\absolute*{\Omega}}}^{i}=1\;\;\;\;\;\;\;\;\;\;\;\;\;\;\;\;\;\;\;\;\;\;\;\;\;\;\;\;\;\;\;\;\;\;\;\;\; i\in\brackets*{k}\\
    &x_{l_1,\ldots,l_{\absolute*{\Omega}}}^{i}\geq 0 \;\;\;\;\;\;\;\;\;\;\;\;\;\;\;\;\;\;\;\;\;\;\;\;\;\;\;\;\;\;\;\;\;\;\;\;\;\;\;\;\;\;\;\;\;\;\;\;\;\;\;\;\;\;\;\; 0\leq l_1,\ldots,l_{\absolute*{\Omega}}\leq\frac{1}{\epsilon},\;\; i\in\brackets*{k}\\
    &y_{l_1,\ldots,l_{\absolute*{\Omega}},r_1,\ldots,r_{\absolute*{\Omega}}}^{i_1,i_2}\geq 0 \;\;\;\;\;\;\;\;\;\;\;\;\;\;\;\;\;\;\;\;\;\;\;\;\;\;\;\;\;\;\;\;\;\;\;\;\;\;\;\;\;\;\;\; 0\leq l_1,\ldots,l_{\absolute*{\Omega}},r_1,\ldots,r_{\absolute*{\Omega}}\leq\frac{1}{\epsilon},\;\; \parentheses*{i_1,i_2}\in S_M.
    \end{align*}
The first constraint family ensures Bayes-plausibility, the second, third, fourth and seventh -- that $P_{i_1}$ is a mean-preserving spread of $P_{i_2}$ for every $\parentheses*{i_1,i_2}\in S_M$, the fifth and sixth -- that $P_1,\ldots,P_k$ are probability distributions. Therefore, the solutions of the above LP correspond exactly to $P_1,\ldots,P_k$ that can be induced by a signaling scheme, provided that $\supp\parentheses*{P_1},\ldots,\supp\parentheses*{P_k}\subseteq W$. Moreover, the LP has $k\cdot \parentheses*{1+\frac{1}{\epsilon}}^{\absolute*{\Omega}}+\absolute*{S_M}\cdot \parentheses*{1+\frac{1}{\epsilon}}^{2\absolute*{\Omega}}=\poly\parentheses*{k,\frac{1}{\epsilon}}$ variables since $\absolute*{\Omega}=O\parentheses*{1}$, and $\absolute*{\Omega}\cdot k+ \parentheses*{2+\absolute*{\Omega}}\cdot\parentheses*{1+\frac{1}{\epsilon}}\cdot\absolute*{S_M}+k+\parentheses*{1+\frac{1}{\epsilon}}\cdot k+\parentheses*{1+\frac{1}{\epsilon}}^2\cdot \absolute*{S_M} =\poly\parentheses*{k,\frac{1}{\epsilon}}$ constraints; thus, it is solvable in $\poly\parentheses*{k,\frac{1}{\epsilon}}$-time.

Let $\pi^*$ be an optimal signaling scheme and let $P^{*}_1,\ldots,P^{*}_k$ be the corresponding marginal distributions over posteriors. Consider now the following discretization procedure. Initialize $P_1=P^{*}_1$,\ldots, $P_k=P^{*}_k$ and traverse the forest $G^M$ in preorder. When visiting a node representing a certain receiver $R_i$ -- move all the probability mass in $P_i$ contained in a certain grid hypercube to an extreme point of this hypercube for which the value of $u^i$ is maximal. Moreover, when moving a mass from some marginal posterior $q$ to another marginal posterior $\tilde{q}$ -- for any descendant $R_{i'}$ of $R_i$ in $G^M$, move the same mass by the same vector from every $q'\in\supp\parentheses*{P_{i'}}$ such that a mass is moved from $q'$ to $q$ when considering $P_i$ as a mean-preserving spread of $P_{i'}$.\footnote{Note that some probability mass may be moved outside $\Delta\parentheses*{\Omega}$, but this issue will be fixed on a later iteration.} This transformation preserves the mean-preserving spread constraints between $R_i$ and its descendants in $G^M$, and cause an additive loss of $O\parentheses*{\epsilon}$ to the expectation of $u\parentheses*{\cdot}$.\footnote{Here we use that $u^1,\ldots,u^k$ are upper semi-continuous and either $O\parentheses*{1}$-Lipschitz or piecewise constant; the claim holds for small enough values of $\epsilon$ for which every grid hypercube is covered by a single piece of every piecewise constant function among $u^1,\ldots,u^k$.}. Such an operation might violate the Bayes-plausibility constraints and the mean-preserving spread constraints between $R_i$ and its ancestors in $G^M$, which should be fixed by moving $O\parentheses*{\epsilon}$ fractions of the probability mass from all of the marginal distributions over posteriors of the ancestors of $R_i$ to the extreme points of $\Delta\parentheses*{\Omega}$. The resultant expected loss for $u\parentheses*{\cdot}$ is $\Delta\parentheses*{\Omega}$, as $u^1,\ldots,u^k$ are $O\parentheses*{1}$-bounded (since they are upper semi-continuous).

The above process yields a feasible solution to our LP, consisting of the marginal distributions of a signaling scheme together with the variables $\braces*{y_{l_1,\ldots,l_{\absolute*{\Omega}},r_1,\ldots,r_{\absolute*{\Omega}}}^{i_1,i_2}}_{0\leq l_1,\ldots,l_{\absolute*{\Omega}},r_1,\ldots,r_{\absolute*{\Omega}}\leq\frac{1}{\epsilon},\; \parentheses*{i_1,i_2}\in S_M}$ representing the belief distributions of Blackwell-ordered pairs of receivers in terms of mean-preserving spreads. By~\cite{aumann1995repeated}, given such a representation -- one can compute the corresponding signaling scheme (that is, a table specifying, for every state of nature, the probabilities for sending the possible signal realizations) via an explicit formula in a time polynomial in the number of variables. As the additive loss during the procedure in the expectation of the sender's utility is $O\parentheses*{\epsilon}$, the proof is complete.
\end{proof}
\section{Proof of the Negative Computational Result}
\label{app:hard}
\begin{proof}[Proof of Theorem~\ref{thm:hardness}]
We shall reduce the minimum $b$-union problem, which is known to be $NP$-hard~\cite{vinterbo2002note}, to the problem of finding an optimal signaling scheme under a communication structure $M$ given as an input. Consider a problem instance with a fixed universal set $\braces*{1,\ldots,w}$ and input sets $Q_1,\ldots,Q_t\subseteq \braces*{1,\ldots,w}$ such that the goal is to find $1\leq i_1<\ldots<i_{b}\leq t$ for which $\absolute*{Q_{i_1}\cup\ldots\cup Q_{i_{b}}}$ is minimal, where $b$ is given as an input.

Define a multi-channel Bayesian persuasion setting instance as follows. Take a uniform prior distribution $p=\parentheses*{\frac{1}{2},\frac{1}{2}}$ over $\Omega=\braces*{0,1}$. Introduce receivers $R_{Q_1},\ldots,R_{Q_t},R_1,\ldots,R_w$. Take such a communication structure $M$ that each one of $R_{Q_1},\ldots,R_{Q_t}$ sees a separate communication channel, and each one of $R_1,\ldots,R_w$ sees exactly the channels observed by $R_{Q_{i'}}$ for which $i\in Q_{i'}$. Under this communication structure, $S_M=\braces*{\parentheses*{i_1,Q_{i_2}}:i_1\in Q_{i_2}}$.

Define the partition $T=\braces*{\braces*{R_1},\ldots,\braces*{R_w},\braces*{R_{Q_1},\ldots,R_{Q_t}}}$ of the receivers' set. Set $r_1=\ldots=r_w=1$, $r_{Q_1,\ldots,Q_t}=4w$ in Definition~\ref{def:supermaj} to specify the sender's separable supermajority utility function.

Let $U_{R_l}$ ($l\in\brackets*{w}$) be $0$ if $p_{\sigma}^{l}\parentheses*{\omega=1}>0.9$ and $1$ otherwise. Moreover, let $U_{Q_1,\ldots,Q_t}$ be $1$ if at least $b$ among $R_{Q_1},\ldots,R_{Q_t}$ have a posterior probability of $1$ for $\braces*{\omega=1}$ and $0$ otherwise. Note that the above functions are upper semi-continuous and piecewise constant with $2$ pieces; thus, there exists a direct sender's utility function inducing $u$ for binary-action receivers with ties broken in sender's favour~\cite{kamenica2011bayesian}.

Let $h$ be the minimal cardinality of a union of $b$ sets among $Q_1,\ldots,Q_t$. Then no signaling scheme can achieve expected sender's utility greater than $p\parentheses*{\omega=0}\cdot\parentheses*{1\cdot w}+p\parentheses*{\omega=1}\cdot\parentheses*{4w-h}=\frac{5w-h}{2}$. Moreover, every signaling scheme achieving this bound truthfully reveals the state to $b$ receivers among $R_{Q_1},\ldots,R_{Q_t}$ representing the sets constituting a minimum-cardinality union of $b$ sets among $Q_1,\ldots,Q_t$, and does so in a way that no receiver $R_{i}$ for $i\in\brackets*{w}$ not in the above set union has, with a positive probability, $p_{\sigma}^i\parentheses*{\omega=1}>0.9$. Note that at least one such scheme exists -- the sender should transmit a fully informative signal via the each of the channels observed exclusively by one of the $b$ receivers among $R_{Q_1},\ldots,R_{Q_t}$ corresponding to $b$ sets with a minimum-cardinality union, and transmit a totally non-informative signal via all the other channels. Therefore, the $b$-union problem reduces to finding an optimal signaling scheme in multi-channel persuasion with a separable supermajority sender's utility function, as desired.
\end{proof}
\end{document}